\definecolor{OliveGreen}{HTML}{3C8031}
\newenvironment{fminipage}%
  {\end{minipage}\end{Sbox}\fbox{\TheSbox}}
\newtheorem{theorem}{Theorem}[section]
\newtheorem{proposition}[theorem]{Proposition}
\newtheorem{lemma}[theorem]{Lemma}
\newtheorem{definition}[theorem]{Definition}
\newtheorem{claim}[theorem]{Claim}
\newlength{\widebarargwidth}
\newlength{\widebarargheight}
\newlength{\widebarargdepth}
\long\def\@makecaption#1#2{
       \vskip 0.8ex
       \setbox\@tempboxa\hbox{\small {\bf #1:} #2}
       \parindent 1.5em 
       \dimen0=\hsize
       \advance\dimen0 by -3em
       \ifdim \wd\@tempboxa >\dimen0
               \hbox to \hsize{
                       \parindent 0em
                       \hfil 
                       \parbox{\dimen0}{\def\baselinestretch{0.96}\small
                               {\bf #1.} #2
                               } 
                       \hfil}
       \else \hbox to \hsize{\hfil \box\@tempboxa \hfil}
       \fi
       }
\long\def\comment#1{}
\newcommand{\defn}{\coloneqq}
\DeclareMathOperator{\Var}{Var}
\DeclareMathOperator{\Cov}{Cov}
\newcommand{\EE}{\ensuremath{{\mathbb{E}}}}
\newcommand{\BER}{\ensuremath{\mbox{\sf Ber}}}
\newcommand{\BIN}{\ensuremath{\mbox{\sf Bin}}}
\newcommand{\calE}{\mathcal{E}}
\newcommand{\calP}{\mathcal{P}}
\newcommand{\calQ}{\mathcal{Q}}
\newcommand{\calS}{\mathcal{S}}
\newcommand{\calC}{\mathcal{C}}
\newcommand{\bbone}[1]{\mathbbm{1}\{#1\}}
\newcommand{\set}[1]{\{#1\}}
\newcommand{\defeq}{\coloneqq}
\newcommand{\N}{\mathbb{N}}
\newcommand{\R}{\mathbb{R}}
\newcommand{\E}{\mathbb{E}}
\renewcommand{\P}{\mathbb{P}}
\title{Detection of Dense Subhypergraphs by Low-Degree Polynomials}
\author[1]{Abhishek Dhawan\thanks{Email: \textit{abhishekdhawan@gatech.edu}. A.D.\ was supported in part by NSF grant DMS-2053333.}}
\author[2]{Cheng Mao\thanks{Email: \textit{cheng.mao@math.gatech.edu}. C.M.\ was supported in part by NSF grants DMS-2053333 and DMS-2210734.}}
\author[3]{Alexander S.\ Wein\thanks{Email: \textit{aswein@ucdavis.edu}. Part of this work was done while A.S.W.\ was with the Algorithms and Randomness Center at Georgia Tech, supported by NSF grants CCF-2007443 and CCF-2106444.}}
\affil[1,2]{School of Mathematics, Georgia Institute of Technology}
\affil[3]{Department of Mathematics,
University of California, Davis}
\date{}
\begin{document}

\maketitle

\begin{abstract}
Detection of a planted dense subgraph in a random graph is a fundamental statistical and computational problem that has been extensively studied in recent years. We study a hypergraph version of the problem. Let $G^r(n,p)$ denote the $r$-uniform Erd\H{o}s-R\'enyi hypergraph model with $n$ vertices and edge density $p$. We consider detecting the presence of a planted $G^r(n^\gamma, n^{-\alpha})$ subhypergraph in a $G^r(n, n^{-\beta})$ hypergraph, where $0< \alpha < \beta < r-1$ and $0 < \gamma < 1$. Focusing on tests that are degree-$n^{o(1)}$ polynomials of the entries of the adjacency tensor, we determine the threshold between the easy and hard regimes for the detection problem. More precisely, for $0 < \gamma < 1/2$, the threshold is given by $\alpha = \beta \gamma$, and for $1/2 \le \gamma < 1$, the threshold is given by $\alpha = \beta/2 + r(\gamma - 1/2)$.

Our results are already new in the graph case $r=2$, as we consider the subtle \emph{log-density regime} where hardness based on average-case reductions is not known. Our proof of low-degree hardness is based on a \emph{conditional} variant of the standard low-degree likelihood calculation.
\end{abstract}

\tableofcontents

\section{Introduction}\label{section:intro}

Finding a dense subgraph in a given undirected graph is an iconic problem at the intersection of graph theory, computer science, and statistics, and it finds broad applications in social and biological sciences.
The last few decades have observed a wide range of research on multiple variants of the problem, including the planted clique problem \cite{karp1976probabilistic,jerrum1992large}, densest $k$-subgraph problem \cite{feige1997densest,log-density}, community detection \cite{fortunato2010community,AV-detection}, and hypergraph versions \cite{chlamtac2018densest,corinzia2022statistical,luo2022tensor}, among others.
In this work, we focus on the problem of detecting the presence of a \emph{planted dense subhypergraph} in a given $r$-uniform hypergraph $H$ on $n$ vertices.
In the language of statistical hypothesis testing, we consider the task of testing between the following two distributions on $H$ (defined formally in Section~\ref{section:main_results}):
\begin{itemize}
\item the \emph{null distribution} is the $r$-uniform Erd\H{o}s-R\'enyi hypergraph $G^r(n,q)$ with edge density $q$;
\item the \emph{planted distribution} randomly selects $\approx \rho n$ vertices to lie in the planted subhypergraph, the edges within the subhypergraph occur independently with probability $p$, and all other edges occur independently with probability $q$.
\end{itemize}
Given a single hypergraph $H$ drawn from one of these two distributions, the goal is to distinguish the two cases with high probability (w.h.p.), i.e., probability $1-o(1)$ as $n \to \infty$, where $p,q,\rho$ may scale with $n$. We assume the parameters $p,q,\rho$ are known.

Our focus is on understanding the power and limitation of \emph{computationally efficient} tests, i.e., polynomial-time algorithms. We currently lack complexity-theoretic tools to prove computational hardness of \emph{average-case} problems like this one (where the input is random), so the leading approaches are either based on average-case reductions which formally relate different average-case problems to each other (see e.g.~\cite{BB-secret} and references therein) or unconditional lower bounds against restricted classes of algorithms.

Our main result establishes sharp necessary and sufficient conditions on $p,q,\rho$ for success of \emph{low-degree polynomial tests}. This is a powerful class of tests including statistics like small subgraph counts (edges, triangles, etc.). It is by now well-established that these low-degree tests are a useful proxy for computationally efficient tests, in the sense that the best-known polynomial-time algorithms for a wide variety of high-dimensional testing problems are captured by the low-degree class; see e.g.~\cite{sam-thesis,kunisky2019notes}.

More specifically, we focus on the so-called \emph{log-density} regime \cite{log-density} where $p=n^{-\alpha}$, $q=n^{-\beta}$, and $\rho = n^{\gamma-1}$ for constants $0 < \alpha < \beta < r-1$ and $\gamma \in (0,1)$. The assumption $\beta < r-1$ precludes isolated vertices. We show that if $0 < \gamma < 1/2$ and $\alpha < \beta \gamma$, or if $1/2 \le \gamma < 1$ and $\alpha < \beta/2 + r(\gamma - 1/2)$, there is a constant-degree polynomial test that distinguishes the null and the planted distribution w.h.p. 
On the other hand, if $0 < \gamma < 1/2$ and $\alpha > \beta \gamma$, or if $1/2 \le \gamma < 1$ and $\alpha > \beta/2 + r(\gamma - 1/2)$, then no degree-$n^{o(1)}$ polynomial can separate the null and the planted distribution.

\subsection{Relation to prior work}

\paragraph{Planted dense subgraph.}

Our results are already interesting in the graph case $r = 2$, in which case we are considering the \emph{planted dense subgraph detection} problem. The statistical threshold is well-studied~\cite{AV-detection,BI-detection,VA-detection-sparse,HWX-reduction}, and the computational threshold has also been established in various parameter regimes via reduction from the presumed-hard \emph{planted clique} problem~\cite{HWX-reduction,BBH-reduction,BBH-univ}. However, existing reductions assume that $p$ and $q$ are of the same order, and so we are currently lacking reduction-based evidence for hardness in the log-density regime where $p=n^{-\alpha}$, $q=n^{-\beta}$, and $\rho = n^{\gamma-1}$ for constants $0 < \alpha < \beta < 1$ and $\gamma \in (0,1)$. Here the problem undergoes a qualitative change at $\gamma = 1/2$. Namely, when $\gamma \ge 1/2$, the best-known polynomial-time algorithm is simply to threshold the total number of edges in the graph, and this succeeds when $\alpha < \beta/2 + 2(\gamma-1/2)$. On the other hand, when $\gamma < 1/2$, the best-known polynomial-time algorithm is a more subtle subgraph-counting procedure which succeeds when
$\alpha < \beta \gamma$; see Section~3.2 of~\cite{log-density} (they give only a proof sketch, as their main goal is to give guarantees for approximating the densest $k$-vertex subgraph in \emph{worst-case} graphs).

Our main result shows that for every $\gamma \in (0,1)$, low-degree tests cannot surpass the thresholds above, providing evidence for optimality of the existing algorithms. We also give a matching low-degree testing upper bound, confirming that the above algorithms are captured by the low-degree framework; in the case $\gamma < 1/2$, our test is simpler than that of~\cite{log-density}, although their algorithm can also \emph{recover} the planted subgraph.

In contrast, the \emph{statistical} (i.e., information-theoretic) threshold for detection (i.e., testing) is distinct from the computational threshold above, at least in some parameter regimes~\cite{AV-detection,VA-detection-sparse}.
Specifically, there is a gap between the statistical and computational thresholds whenever $\gamma \in (0,1/2]$. Also, when $\gamma \in (1/2,2/3)$ there is a gap for some values of $\beta$. We will elaborate further in our discussion of the hypergraph case below.

While it is not our main focus, we note that the statistical and computational aspects of \emph{recovering} a planted dense subgraph (i.e., identifying the planted vertices)\ are also well-studied~\cite{ames-recovery,CX-growing,M-one-comm,HWX-info}. Formally, the recovery problem is at least as hard as the detection problem (via a polynomial-time reduction in the style of~\cite[Section~5.1]{MW-reduction}). Notably, when $\gamma > 1/2$, the recovery problem is \emph{strictly} harder, i.e., the computational thresholds for detection and recovery are different (at least for low-degree algorithms), with recovery requiring $\alpha < \beta/2+\gamma-1/2$~\cite{SW-estimation}. When $\gamma < 1/2$, our result shows low-degree hardness of detection at the same threshold as the recovery algorithm of~\cite{log-density}, so there is no \emph{detection-recovery gap} in this regime, resolving a question left open by \cite[Section~2.4.1]{SW-estimation}.

\paragraph{Low-degree testing.}

A successful degree-$D$ test is a degree-$D$ multivariate polynomial in the input variables (in our case, the $\binom{n}{r}$ hyperedge-indicator variables) whose real-valued output \emph{separates} (see Definition~\ref{def:strong-separation}) samples from the planted and null distributions. The idea to study this class of tests emerged from the line of work~\cite{sos-clique,HS-bayesian,sos-hidden,sam-thesis}; see also~\cite{kunisky2019notes} for a survey. Tests of degree $O(\log n)$ are generally taken as a proxy for polynomial-time tests, as they capture leading algorithmic approaches such as spectral methods. Our upper bound will give a constant-degree test, which yields a polynomial-time algorithm for testing; our lower bound will rule out all tests of degree $n^{o(1)}$.

There is by now a standard method for proving low-degree testing bounds based on the \emph{low-degree likelihood ratio} (see Section~2.3 of~\cite{sam-thesis}), which boils down to finding an orthonormal basis of polynomials with respect to the null distribution, and computing the expectations of these basis polynomials under the planted distribution. However, our setting is more subtle because (for $\gamma < 1/2$) the second moment of the low-degree likelihood ratio diverges due to rare ``bad'' events under the planted distribution. We therefore need to carry out a conditional low-degree argument whereby the planted distribution is conditioned on some ``good'' event.

Conditional low-degree arguments of this kind have appeared before in a few instances~\cite{fp,grp-testing}, but our argument differs on a technical level. Prior work chose to condition on an event that would seem to make direct computations with the orthogonal polynomials very complicated; to overcome this, they bound the conditional low-degree likelihood ratio in an indirect way by first relating it to a certain ``low-overlap'' second moment (also called the \emph{Franz-Parisi criterion} in~\cite{fp}). In contrast our approach is more direct: we are careful to condition on an event for which computations with the orthogonal polynomials remain tractable.

\paragraph{Integrality gaps for densest $k$-subgraph.}

Complementary to our results is a line of work on establishing integrality gaps for convex relaxations of densest $k$-subgraph~\cite{BCGVZ-integrality,chlamtavc2018sherali,sos-dense-subgraph} (and~\cite{chlamtavc2018sherali} also considers the extension to hypergraphs). Here the setting of interest is to find, in a \emph{worst-case} graph, a subgraph induced on $k$ vertices (for a given $k$) whose density of edges is guaranteed to be within some factor of the densest such $k$-subgraph. The best-known polynomial-time approximation factor is $\approx n^{1/4}$, due to~\cite{log-density}. The integrality gap results mentioned above show that powerful families of convex relaxations fail to improve upon this.

In fact, the proofs of these integrality gaps show that the convex relaxations fail on certain random distributions such as the ones we study. The hard instance for approximation factor $n^{1/4-\epsilon}$ is precisely our testing problem with parameters $\alpha = 1/4+\epsilon$, $\beta = 1/2$, $\gamma = 1/2$ (and so our result provides additional evidence that this approximation factor is unimprovable). The work on convex relaxations that is most relevant to our result is the recent result of~\cite{sos-dense-subgraph} which was obtained independently from ours. We provide a detailed comparison below.

\paragraph{Comparison to independent work on sum-of-squares~\cite{sos-dense-subgraph}.}

In concurrent and independent work,~\cite{sos-dense-subgraph} proved that the \emph{sum-of-squares (SoS) hierarchy} (a powerful family of semidefinite programming relaxations) at degree $n^{\Omega(1)}$ fails to improve the approximation factor $n^{1/4}$ discussed above. This is the strongest known result on integrality gaps for the densest $k$-subgraph problem. While they are conjectured to be closely related (see~\cite{sos-hidden,sam-thesis}), we are not aware of any formal implications in either direction between failure of SoS and failure of low-degree tests.

More specifically, the authors of~\cite{sos-dense-subgraph} consider the same hypothesis testing problem that we do (in the graph case $r=2$) and write down a particular SoS program that takes as input a sample from the null distribution $G(n,q)$ and attempts to \emph{refute} the existence of a $(1+o(1))\rho n$-vertex subgraph with $(1+o(1)) p (\rho n)^2/2$ edges. Note that a subgraph of this density exists in the planted distribution with high probability, and thus, if SoS succeeds at the refutation task (w.h.p.)\ this also gives an algorithm for the detection (i.e., testing) task with the same parameters $p,q,\rho$. The result of~\cite{sos-dense-subgraph} shows that SoS \emph{fails} to solve the refutation task when $\gamma < 1/2$ and $\alpha > \beta \gamma$, matching the threshold in our result. This gives strong evidence for computational hardness of the \emph{refutation} problem in this regime. However, this need not imply hardness of the \emph{detection} problem (the focus of our work), since detection is a formally easier problem.

To illustrate the previous point, when $\gamma > 1/2$ (a regime not covered by~\cite{sos-dense-subgraph}) we expect an inherent gap between the detection and refutation tasks, with polynomial-time detection requiring $\alpha < \beta/2 + 2(\gamma-1/2)$ and polynomial-time refutation requiring $\alpha < \beta/2 + \gamma - 1/2$; see Appendix~\ref{app:det-ref} for more details. Therefore, in this regime, we expect that checking feasibility of the SoS program of~\cite{sos-dense-subgraph} is a strictly suboptimal algorithm for the \emph{detection} problem. Accordingly, one strength of our result is that it directly addresses the detection problem and captures the best-known computational threshold for all $\gamma \in (0,1)$.

The above discussion warns that one should be careful when taking an SoS lower bound as evidence for hardness of \emph{detection} (rather than refutation). However, in this case the situation is more subtle because the proof of~\cite{sos-dense-subgraph} uses the \emph{pseudo-calibration} approach~\cite{sos-clique}, and so a key step in their analysis is closely related to the low-degree testing lower bounds that we prove. Namely they prove that the low-degree likelihood ratio $L_{\le D}$ (in their notation, $\widetilde{\EE}[1]$) is $1+o(1)$ with high probability over the null distribution (and like our result, this involves some conditioning arguments). Our approach is related but distinct: we show that the second moment of the \emph{conditional} low-degree likelihood ratio (see Section~\ref{section:lower_bound}) is $1+o(1)$, and this is what implies our desired result (failure of weak separation). The proof of~\cite{sos-dense-subgraph} also requires some additional steps, notably a rather involved and impressive analysis to show that their moment matrix is positive semidefinite.

\paragraph{Planted dense subhypergraphs.}
Compared to planted dense subgraphs, the generalized problem for hypergraphs is less well-studied.
For the worst-case densest subhypergraph problem, \cite{applebaum2012pseudorandom} shows that it is hard to approximate the densest subhypergraph to within an $n^\epsilon$ factor for a fixed $\epsilon > 0$ based on a pseudorandomness assumption.
In the line of research on integrality gaps discussed above, the result of \cite{chlamtavc2018sherali} holds for $r$-uniform hypergraphs, establishing an integrality gap of $\approx n^{(r-1)/4}$ for the Sherali-Adams hierarchy, which matches the log-density threshold.
However, unlike the graph case $r=2$, no polynomial-time approximation algorithm is known to match the above threshold for $r \ge 4$, and partial progress has been made for $r = 3$ by \cite{chlamtac2018densest}.

Considering the same planted dense subhypergraph model as ours, \cite{luo2022tensor} studies both the detection and the recovery problem.
For the detection problem with $p = 1$ and $q = 1/2$ (i.e., planted clique), it is proved that degree-$O(\log n)$ polynomial tests fail if $\gamma < 1/2$.
This is complementary to our result as we consider $p = n^{-\alpha}$ and $q = n^{-\beta}$ with $0 < \alpha < \beta < r-1$.
For the recovery problem, \cite{luo2022tensor} focuses on the regime $\gamma \ge 1/2$ and proposes an efficient algorithm succeeding when $\alpha < \beta/2 + (r-1)(\gamma - 1/2)$.
In addition, if $\alpha > \beta/2 + (r-1)(\gamma - 1/2)$, it is shown that polynomials of degree $\operatorname{polylog}(n)$ fail to recover the planted dense subhypergraph.
Note that this recovery threshold is different from the detection threshold $\alpha = \beta/2 + r (\gamma - 1/2)$ proved in this work; therefore, these two results together generalize the detection-recovery gap in the regime $\gamma \ge 1/2$ shown in \cite{SW-estimation} for $r = 2$ to the case of hypergraphs.

Another line of research
\cite{yuan2022sharp,yuan2021heterogeneous,yuan2021information} concerns the statistical thresholds for planted dense subhypergraphs.
In particular, in the same vein as the results from \cite{AV-detection}, it is shown in \cite{yuan2022sharp} that the optimal statistical threshold for detection is achieved either by a total degree test or a computationally inefficient scan test.
When the former dominates, the detection threshold naturally matches our result, but when the scan test performs better, there is a statistical-computational gap.
To be more precise, the total degree test succeeds when $\alpha < \alpha_{\mathrm{deg}} \defeq \beta/2 + r(\gamma-1/2)$ and the scan test succeeds when
$\alpha < \alpha_{\mathrm{scan}} \defeq \gamma (r-1)$.
If $\gamma \le 1/2$, the scan test strictly outperforms our condition $\alpha < \beta \gamma$, so there is a statistical-computational gap. Moreover, we also have a gap when $1/2 < \gamma < r/(r+1)$ and $2r(\gamma-1/2) < \beta < r-2\gamma$ (where the first condition on $\gamma$ ensures the second interval for $\beta$ is nonempty), since in this case $\alpha_{\mathrm{scan}} > \alpha_{\mathrm{deg}}$ and $\alpha_{\mathrm{deg}} < \beta$ (note that when $\alpha_{\mathrm{deg}} \ge \beta$, the total degree test succeeds for all $\alpha < \beta$ and there is no gap).

In addition, \cite{corinzia2019exact,corinzia2022statistical} study the recovery of a planted dense subhypergraph via a tensor PCA model with additive Gaussian noise.
They probe the computational threshold of the problem using Approximate Message Passing, but the result is not directly comparable to ours.
Finally, we refer the reader to a recent survey \cite[Section~5.7]{lanciano2023survey} for more related works on dense subhypergraphs.

\subsection*{Notation}
Let $\N$ denote the set of postive integers.
For any $n \in \N$, let $[n] := \{1,2,\dots,n\}$.
Throughout this work, we consider $n \to \infty$ and use the asymptotic notation $O(\cdot)$, $o(\cdot)$, etc.

For a fixed integer $r \ge 2$, let $K_n^r$ denote the complete $r$-uniform hypergraph on $n$ vertices.
For any hypergraph $H$, let $V(H)$ denote its vertex set and let $E(H)$ denote its edge set.
For a hypergraph $H$ and $S \subseteq E(H)$, let $H[S]$ denote the subgraph of $H$ induced by $S$.
Note that any set of hyperedges $S \subseteq E(K_n^r)$ can be identified with the subhypergraph $K_n^r[S]$, so we sometimes view $S$ as a subhypergraph without ambiguity.
For brevity, we often refer to hypergraphs and hyperedges simply as graphs and edges, respectively.

For a distribution $\calP$, let $\E_\calP$ denote the expectation under $\calP$, and with slight abuse of notation, let $\calP$ also denote the associated probability.
Let $\Var_{\calP}(\cdot)$ denote the variance of a random variable under $\calP$.
Let $\BER(p)$ denote the Bernoulli distribution with parameter $p \in [0,1]$, and let $\BIN(n,p)$ denote the Binomial distribution with parameters $n \in \N$ and $p \in [0,1]$.

\section{Main results}\label{section:main_results}

Let $n, r \in \N$ with $n \ge r \ge 2$, and let $ 0 < \alpha < \beta < r-1$ and $\gamma \in (0,1)$. Define $M \defeq \binom{n}{r}$.
We will be considering random $n$-vertex $r$-uniform hypergraph models where the output is an undirected hypergraph $H$ with adjacency tensor $Y \in \{0, 1\}^M$.
To be more precise, a hyperedge $e$ is a subset of $[n]$ of cardinality $r$; we let $[M] \defeq \binom{[n]}{r}$ denote the set of all hyperedges of $K_n^r$ and write $e \in [M]$.
The adjacency tensor $Y$ is indexed by $e \in [M]$, and $Y_e \in \{0,1\}$ indicates the presence of hyperedge $e$.

We formulate the detection of a dense random subhypergraph as a statistical hypothesis testing problem between distributions $\calP$ and $\calQ$, defined as follows:
\begin{itemize}
\item
Under $\calP$, let $z_1, \ldots, z_n$ be i.i.d.\ random variables from the Bernoulli distribution $\BER(\rho)$, where $\rho \defeq n^{\gamma - 1}$, and let $Z \defeq \{i \in [n] \,:\, z_i = 1\}$.
Conditional on $Z$, we observe an $r$-uniform hypergraph $H$ with independent hyperedges
\[Y_{e} \sim \left\{\begin{array}{cc}
    \BER(q) & \text{if } e \not \subseteq Z, \\
    \BER(p) & \text{if } e \subseteq Z,
\end{array}\right.\]
where $p \defeq n^{-\alpha}$ and $q \defeq n^{-\beta}$.

\item 
Under $\calQ$, we observe an $r$-uniform Erd\H{o}s-R\'enyi hypergraph $H$ with adjacency tensor $Y$, where the hyperedges are i.i.d.\ $Y_e \sim \BER(q)$ with $q \defeq n^{-\beta}$. 
\end{itemize}

To probe the computational threshold for testing between $\calP$ and $\calQ$, we focus on low-degree polynomial algorithms (e.g.,~\cite{sam-thesis,kunisky2019notes}).
Let $\R[Y]_{\le D}$ denote the set of multivariate polynomials in the entries of $Y$ with degree at most $D$.
With some abuse of notation, we will often say ``a polynomial'' to mean a sequence of polynomials $f = f_n \in \R[Y]_{\le D}$, one for each problem size $n$; the degree $D = D_n$ of such a polynomial may scale with $n$. To study the ability of a polynomial in testing $\calP$ against $\calQ$, we consider the notions of \emph{strong separation} and \emph{weak separation} defined in \cite{fp}, with the former being stronger than the latter.

\begin{definition}[\cite{fp}, Definition~1.6]
\label{def:strong-separation}
As $n \to \infty$, a polynomial $f \in \R[Y]_{\le D}$ is said to
\begin{itemize}
\item
strongly separate $\calP$ and $\calQ$ if
\( \sqrt{\Var_{\calP}(f(Y)) \vee \Var_{\calQ}(f(Y))} = o\left(\left|\E_{\calP}[f(Y)] - \E_{\calQ}[f(Y)]\right|\right) \);

\item
weakly separate $\calP$ and $\calQ$ if
\( \sqrt{\Var_{\calP}(f(Y)) \vee \Var_{\calQ}(f(Y))} = O\left(\left|\E_{\calP}[f(Y)] - \E_{\calQ}[f(Y)]\right|\right) \).
\end{itemize}
\end{definition}
\noindent
See \cite{fp} for a detailed discussion on why these conditions are natural for hypothesis testing.
In particular, by Chebyshev's inequality, strong separation implies that we can threshold $f(Y)$ to
test $\calP$ against $\calQ$ with vanishing type I and type II errors.
Our main results are the following.

\begin{theorem} 
\label{thm:main-results}
Suppose that we observe a random $n$-vertex $r$-uniform hypergraph $Y \in \{0,1\}^M$ from either $\calP$ or $\calQ$ with parameters $p = n^{-\alpha}$, $q = n^{-\beta}$, and $\rho = n^{\gamma - 1}$ for fixed $r \ge 2$, $0 < \alpha < \beta < r-1$, $\gamma \in (0,1)$, $n \to \infty$, and $M = \binom{n}{r}$.
\begin{itemize}
\item
(Lower bound)
Suppose that either (1) $\gamma \ge 1/2$ and $\alpha > \beta/2 + r(\gamma - 1/2)$, or (2) $\gamma < 1/2$ and $\alpha > \beta \gamma$.
If $D = n^{o(1)}$ then no polynomial in $\R[Y]_{\le D}$ weakly separates $\calP$ and $\calQ$.

\item
(Upper bound)
Suppose that either (1) $\gamma \ge 1/2$ and $\alpha < \beta/2 + r(\gamma - 1/2)$, or (2) $\gamma < 1/2$ and $\alpha < \beta \gamma$.
There exists a positive integer $D$ depending only on $(\alpha, \beta, \gamma)$ and a polynomial in $\R[Y]_{\le D}$ that strongly separates $\calP$ and $\calQ$.
\end{itemize}
\end{theorem}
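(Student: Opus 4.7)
The plan is to prove the upper and lower bounds separately, with the lower bound being the substantially more delicate part that requires the novel \emph{conditional} low-degree argument previewed in the introduction.

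For the upper bound with $\gamma \ge 1/2$, the test is simply the total edge count $f(Y) = \sum_{e \in [M]} Y_e$. Routine moment calculations give $\E_{\calQ}[f] \asymp n^{r-\beta}$, $\Var_{\calQ}(f) \asymp n^{r-\beta}$, and $\E_{\calP}[f] - \E_{\calQ}[f] \asymp n^{r\gamma-\alpha}$, from which strong separation follows precisely when $\alpha < \beta/2 + r(\gamma-1/2)$. For $\gamma < 1/2$, the edge count no longer suffices and I would instead count injective copies of an $r$-uniform hypergraph $H$ with $v$ vertices and $e$ edges, using $f(Y) = \sum_{\text{copies of }H} \prod_{e' \in E(H)} Y_{e'}$. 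The planted mean shift is $\asymp n^{v\gamma - \alpha e}$ while the null mean and standard deviation are governed by the sub-hypergraphs of $H$, so I would take $H$ to be a balanced hypergraph (for instance a clique $K_k^{(r)}$ with $k$ chosen depending on $\alpha,\beta,\gamma$) for which $e/v$ is large enough to make the signal-to-noise ratio diverge; the condition $\alpha < \beta\gamma$ is exactly what makes such an $H$ available.

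For the lower bound I would turn to a conditional second moment of the low-degree likelihood ratio. A direct computation of $\E_\calQ[L_{\le D}^2]$ for $L = d\calP/d\calQ$ fails because rare ``bad'' configurations of the latent subset $Z$ --- roughly, atypically dense clusters of planted vertices --- produce divergent contributions. I would therefore define a good event $E$ on $Z$ that holds with probability $1-o(1)$ and rules out such configurations (for instance, a uniform upper bound on $|S \cap Z|$ over all small $S \subseteq [n]$), and let $\calP'$ denote $\calP$ conditioned on $E$. Weak separation of $\calP$ from $\calQ$ would imply weak separation of $\calP'$ from $\calQ$, so it suffices to establish $\E_\calQ[(L'_{\le D})^2] = 1+o(1)$ for $L' = d\calP'/d\calQ$.

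To carry out this computation I would expand in the orthonormal Fourier basis $\{\chi_S\}_{S \subseteq [M]}$ of $L^2(\calQ)$, giving
\begin{equation*}
\E_\calQ[(L'_{\le D})^2] = \sum_{S \subseteq [M],\, |S| \le D} \E_{\calP'}[\chi_S(Y)]^2,
\end{equation*}
and evaluate each coefficient by first conditioning on $Z$: because the edges of $Y$ are Bernoulli-$p$ rather than Bernoulli-$q$ precisely on the $r$-subsets of $Z$, each factor contributes $(p-q)/\sqrt{q(1-q)}$ when $e \subseteq Z$ and $0$ otherwise, so $\E_{\calP'}[\chi_S]$ reduces to $((p-q)/\sqrt{q(1-q)})^{|S|}$ times $\Pr_{\calP'}[\bigcup_{e \in S} e \subseteq Z]$. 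Organizing the sum by the isomorphism shape of the sub-hypergraph $K_n^r[S]$ reduces the bound to a case analysis over $r$-uniform hypergraphs $H$ with $v$ vertices and $e \le D = n^{o(1)}$ edges. The hardest step will be verifying that every nontrivial shape contributes $o(1)$: the threshold conditions $\alpha > \beta\gamma$ (for $\gamma < 1/2$) and $\alpha > \beta/2 + r(\gamma-1/2)$ (for $\gamma \ge 1/2$) control the ``balanced'' shapes, while the conditioning on $E$ is precisely what is needed to kill the pathologically dense shapes whose raw Fourier coefficients would otherwise blow up. Keeping the conditioning event $E$ simple enough that $\E_{\calP'}[\chi_S]$ remains directly tractable --- in contrast to the indirect low-overlap route of~\cite{fp,grp-testing} --- is the key technical insight underlying the lower bound.
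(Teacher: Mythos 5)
Your upper-bound outline matches the paper's approach, but with a real gap in the sub-case $\gamma<1/2$: the test statistic must count a balanced hypergraph $\tilde H$ whose edge-to-vertex ratio $m/\ell$ lies in the open window $\bigl(\tfrac{1}{\beta},\tfrac{\gamma}{\alpha}\bigr)$ (the lower end makes $\E_\calQ[T]\ll\E_\calP[T]$, the upper end controls the variances), and this window can be arbitrarily short. Cliques $K_k^{(r)}$ only realize the sparse discrete set of ratios $\binom{k}{r}/k$ and will in general skip the window (e.g.\ $r=2$, $\beta=1/2$, $\gamma=0.4$, $\alpha=0.19$ gives the window $(2,2.105)$ while clique ratios are half-integers). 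The paper instead invokes a nontrivial existence result for balanced hypergraphs with prescribed rational density (\cite{rucinski1986strongly} for $r=2$, \cite{matushkin2022strictly} in general), so ``for instance a clique'' is not a valid instantiation and the two-sided constraint on $m/\ell$ must be made explicit.

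The more serious gap is in the lower bound for $\gamma<1/2$. You propose a conditioning event $E$ that is a function of $Z$ alone (``a uniform upper bound on $|S\cap Z|$ over all small $S\subseteq[n]$''), and then factor $\E_{\calP'}[\chi_S]=\bigl(\tfrac{p-q}{\sigma}\bigr)^{|S|}\Pr_{\calP'}[V(S)\subseteq Z]$. That factoring is indeed exact when $E$ depends only on $Z$ -- but precisely for that reason such an $E$ cannot repair the divergence. The blow-up in $\|L_{\le D}\|^2$ comes from edge-sets $S$ with high edge-to-vertex ratio $m/\ell$ and moderate $\ell$; the Fourier coefficient is $\rho^\ell\bigl(\tfrac{p-q}{\sigma}\bigr)^m$ and the offending quantity is the factor $\bigl(\tfrac{p-q}{\sigma}\bigr)^{2m}$ summed against the $\asymp n^\ell\binom{\binom\ell r}{m}$ choices of $S$. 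Since $Z$ typically contains $\Theta(n^{\gamma\ell})$ $\ell$-subsets, any event $E$ on $Z$ with $\Pr[E]=1-o(1)$ satisfies $\Pr[V(S)\subseteq Z, Z\in E]=(1-o(1))\rho^\ell$ for typical $S$, so the sum still diverges. The paper's event $\calE$ instead depends on the planted subhypergraph $\calC=H[Z]$ (not just $Z$): it forbids any dense edge-set $S$ with $|E(S)|/|V(S)|\ge\gamma/\alpha+\delta$ from being fully present in $\calC$. Conditioning on $\calE$ forces, for bad $S$, at least $s=m-m_\ell+1$ edges of $S$ to be absent, which replaces $s$ factors of order $p/\sigma$ by factors of order $q/\sigma$ in the bound on $|\E_\calP[\phi_S\,\bbone{Y\in\calE}]|$ (Lemma~3.7 of the paper); that is the mechanism that tames the bad terms, and it is lost if $E$ is measurable with respect to $Z$ alone. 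A correct proof must condition on an edge event, and correspondingly the exact factoring of $\E_{\calP'}[\chi_S]$ you write down must be replaced by the inequality-based bookkeeping of Lemmas~3.6--3.7.
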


\noindent
We have therefore completely characterized the low-degree detection threshold for both $\gamma \ge 1/2$ and $\gamma < 1/2$.
The lower bound in Theorem~\ref{thm:main-results} is proved at the beginning of Section~\ref{section:lower_bound}.
The upper bound is proved at the beginning of Section~\ref{section:upper_bound}.
In particular, in each regime stated in the upper bound, the strong separation is achieved by a constant-degree polynomial in the entries of $Y$, so the testing algorithm is polynomial-time.

\section{Detection lower bound}\label{section:lower_bound}

Following the framework of low-degree polynomial algorithms \cite{sam-thesis,kunisky2019notes,fp}, we first introduce the notation used in this section. 
Let $\tilde Y_e$ be the standardized hyperedge under $\calQ$, i.e., 
\begin{equation}
\Tilde{Y}_e \defeq \frac{Y_e - q}{\sigma},
\label{eq:def-tilde-y-e}
\end{equation}
where $\sigma \defeq \sqrt{\Var_{\calQ}(Y_e)} = \sqrt{q(1-q)}$.
We define
\begin{equation}
    \phi_S(Y) \defeq \prod_{e \in S}\Tilde{Y}_e, \quad \forall S \subseteq [M], \label{eqn:phi_defn}
\end{equation}
Note that $Y_e$ and $Y_f$ are independent under $\calQ$ for $e\neq f$.
Therefore, as $\tilde Y_e$ is centered, we have
\[\EE_\calQ[\phi_S(Y)\phi_{S'}(Y)] = \left\{\begin{array}{cc}
    0 & \text{ if } S \neq S', \\
    1 & \text{ if } S = S',
\end{array}\right.\]
so that $\{\phi_S : S \subseteq [M]\}$ is an orthonormal basis of the set of functions of $Y$.

In order to study polynomials of degree at most $D > 0$, we define following quantity
\begin{equation}
\label{eq:low-deg-likelihood-ratio-norm-def}
\|L_{\leq D}\|^2 \defeq \sum_{\substack{S \subseteq [M], \\ |S| \leq D}} \left( \EE_{\calP}[\phi_S(Y)] \right)^2 .
\end{equation}
When considering an event $\calE$ and the conditional distribution $\calP'$ on $\calE$, we define analogously
\begin{equation}
\|L_{\leq D}'\|^2 \defeq \sum_{\substack{S \subseteq [M], \\ |S| \leq D}}(\EE_{\calP'}[\phi_S(Y)])^2.
\label{eq:conditional-degree-d-norm}
\end{equation}
The notation $\|L_{\le D}\|$ stands for the norm of the degree-$D$ likelihood ratio between $\calP$ and $\calQ$.
We remark that $\|L_{\le D}\|$ is the same as the quantity LD$(D)$ in \cite[Definition~1.3]{fp}, and the equivalence is justified in \cite[Section~2.3]{sam-thesis}.

\begin{proof}[Proof of Theorem~\ref{thm:main-results} (lower bound)]
Let $\calE$ be an event such that $\calP(\calE) = 1 - o(1)$.
Let $\calP'$ be the distribution obtained from $\calP$ by conditioning on the event $\calE$.
By \cite[Proposition~6.2]{fp}, to prove that no polynomial in $\R[Y]_{\le D}$ weakly separates $\calP$ and $\calQ$, it suffices to show that 
\begin{equation}
\|L'_{\le D}\|^2 = 1 + o(1) .
\label{eq:ldlr-1-o-1}
\end{equation}
If $\gamma \ge 1/2$ and $\alpha > \beta/2 + r(\gamma - 1/2)$, we take $\calE$ to be the full sample space so that $\calP = \calP'$ and establish \eqref{eq:ldlr-1-o-1} in Proposition~\ref{prop:lower-large-gamma}.
If $\gamma < 1/2$ and $\alpha > \beta \gamma$, we define the high-probability event $\calE$ in \eqref{eq:def-cale} and establish \eqref{eq:ldlr-1-o-1} in Proposition~\ref{prop:lower-small-gamma}.
\end{proof}

\subsection{Lower bound for large $\gamma$}\label{subsection:large_gamma_lb}

We will first consider the case that $\gamma \geq 1/2$, i.e., the case that the set $Z$ is expected to be ``large''.

\begin{proposition}
\label{prop:lower-large-gamma}
Suppose $p = n^{-\alpha}$, $q = n^{-\beta}$, and $\rho = n^{\gamma - 1}$ with fixed $r \ge 2$, $0 < \alpha < \beta < r-1$, and $\gamma \in (0,1)$ such that
\begin{align}\label{eqn:large_gamma_lb}
    \gamma \geq \frac{1}{2}, \quad \alpha > \frac{\beta}{2} + r\left(\gamma - \frac{1}{2}\right) .
\end{align}
If $D = n^{o(1)}$, then we have
$$
\|L_{\leq D}\|^2 = 1+o(1).
$$
\end{proposition}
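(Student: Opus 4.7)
The plan is to carry out the low-degree likelihood calculation directly; since the event $\calE$ is the whole sample space in this regime (so $\calP' = \calP$), it remains to prove $\|L_{\le D}\|^2 = 1 + o(1)$. First I would compute $\EE_\calP[\phi_S(Y)]$ by conditioning on the planted set $Z$: the $Y_e$ are then independent, with $\EE[\tilde Y_e \mid Z] = (p-q)/\sigma$ if $e \subseteq Z$ and $0$ otherwise. Letting $v(S) \defeq |V(S)|$ and observing that $V(S) \subseteq Z$ iff $e \subseteq Z$ for every $e \in S$, this yields
\begin{equation*}
\EE_\calP[\phi_S(Y)] = \left(\frac{p-q}{\sigma}\right)^{|S|}\rho^{v(S)} .
\end{equation*}
Writing $\mu \defeq (p-q)^2/(q(1-q))$ (so $\mu \le 2 n^{-(2\alpha-\beta)}$ for large $n$, since $p \gg q$), the $S = \emptyset$ contribution is $1$, so
\begin{equation*}
\|L_{\le D}\|^2 - 1 = \sum_{k=1}^D \mu^k \sum_{|S|=k}\rho^{2v(S)} .
\end{equation*}

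Next I would estimate the inner sum by introducing i.i.d.\ copies $Z_1, Z_2$ of $Z$ and using $\rho^{2v(S)} = \Pr(V(S) \subseteq Z_1 \cap Z_2)$, giving $\sum_{|S|=k}\rho^{2v(S)} = \EE[\binom{\binom{W}{r}}{k}]$ for $W = |Z_1 \cap Z_2| \sim \BIN(n, \rho^2)$. Grouping by $v = v(S)$ and using $N(k,v) \defeq |\{S : |S|=k,\, v(S)=v\}| \le \binom{n}{v}\binom{\binom{v}{r}}{k}$, $\binom{n}{v}\rho^{2v} \le (n\rho^2)^v/v!$, and $\binom{\binom{v}{r}}{k} \le v^{rk}/(k!(r!)^k)$ reduces the task to bounding $\sum_{v = r}^{rk}(n\rho^2)^v \, v^{rk}/v!$. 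Since $\gamma \ge 1/2$ implies $n\rho^2 \ge 1$, I would factor out $(n\rho^2)^{rk}$ and bound the remainder by the Bell number $B_{rk} = \sum_{v \ge 0} v^{rk}/v! \le (rk)!$; the multinomial-style estimate $(rk)!/(k!(r!)^k) \le (k^{r-1}e^{r+1})^k$ then yields
\begin{equation*}
\|L_{\le D}\|^2 - 1 \le \sum_{k=1}^D \bigl(\lambda\, k^{r-1}\, e^{r+1}\bigr)^k , \quad \text{where } \lambda \defeq \mu \, n^{r(2\gamma-1)} .
\end{equation*}

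To finish, the hypothesis $\alpha > \beta/2 + r(\gamma - 1/2)$ is equivalent to $\delta \defeq 2\alpha - \beta - 2r(\gamma - 1/2) > 0$, and combined with $\mu \le 2n^{-(2\alpha - \beta)}$ gives $\lambda \le 2 n^{-\delta}$. Because $D = n^{o(1)}$, the factor $k^{r-1}$ is at most $D^{r-1} = n^{o(1)}$ for $k \le D$, so $\lambda k^{r-1}e^{r+1} = o(1)$ uniformly in $k$; the ratio of consecutive terms of the series is $o(1)$, so the sum is dominated by its $k=1$ term, giving $\|L_{\le D}\|^2 - 1 = O(\lambda) = O(n^{-\delta}) = o(1)$.

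The main obstacle I anticipate is the combinatorial analysis of $\sum_v (n\rho^2)^v v^{rk}/v!$: one's first instinct is to argue that the maximum over $v$ is attained at the endpoint $v = rk$, which works for $\gamma > 1/2$ because the geometric factor $(n\rho^2)^v$ dominates, but at the boundary $\gamma = 1/2$ the maximum is actually interior (near $v \sim rk/\log(rk)$). This is why the Bell number estimate, rather than endpoint behavior, furnishes the correct bound uniformly in $\gamma \ge 1/2$. Once this observation is in hand, the rest is routine bookkeeping with Stirling's formula.
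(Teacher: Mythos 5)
Your proof is correct and follows the same high-level strategy as the paper: compute $\EE_\calP[\phi_S(Y)] = \rho^{|V(S)|}((p-q)/\sigma)^{|S|}$ directly, then bound $\|L_{\le D}\|^2 - 1$ by grouping subgraphs by their edge and vertex counts, and show the resulting double sum is $o(1)$ using $\delta \defeq 2\alpha - \beta - 2r(\gamma - 1/2) > 0$. Where you differ is in the bookkeeping. The paper iterates over $\ell = |V(S)|$ first and then over $m = |S| \ge \ell/r$, using the crude count $|S_{\ell,m}| \le n^\ell(eD)^{rm}$ and then summing the inner geometric series with ratio $n^{-r(2\gamma-1)-\delta} < 1/2$ (which is where $\gamma \ge 1/2$ enters). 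You transpose the order, iterating over $k = m$ first and then over $v = \ell \le rk$, keep the $\binom{n}{v}/v!$ and $\binom{\binom{v}{r}}{k}$ factorials rather than discarding them, and control the inner sum over $v$ via the Bell-number estimate. Both routes isolate the same quantity $\lambda = \mu(n\rho^2)^r = n^{-\delta + o(1)}$ and both use $D = n^{o(1)}$ to absorb a $\operatorname{poly}(D)^{\operatorname{poly}(D)}$-sized combinatorial factor; your version is marginally sharper in that it avoids the explicit $(eD)^{rm}$ at the cost of the Bell-number step, but the two are equivalent in strength here. Your observation that the maximum of $(n\rho^2)^v v^{rk}/v!$ moves to an interior $v \asymp rk/\log(rk)$ at the boundary $\gamma = 1/2$ is correct and explains why summing over all $v$ (rather than taking the endpoint term) is the right move in your decomposition; the paper sidesteps this by absorbing $1/v!$ into the cruder $(eD)^{rm}$ factor and instead leans on the inner geometric series being summable.

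One small correction: Dobinski's formula gives $\sum_{v\ge 0} v^n/v! = e\,B_n$, so $\sum_{v\ge 0} v^{rk}/v! \le e\,(rk)!$, not $(rk)!$ as written; for $rk$ small this actually matters (e.g.\ $\sum_v v^2/v! = 2e > 2!$). The extra factor of $e$ is harmless since it is absorbed into the $e^{r+1}$ already present, so your final bound $\|L_{\le D}\|^2 - 1 = O(n^{-\delta/2})$ stands after replacing $e^{r+1}$ by $e^{r+2}$.
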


Let us consider an arbitrary subgraph $S \subseteq [M]$.
We first compute $\EE_\calP[\phi_S(Y)]$.

\begin{lemma}\label{lemma:P_expectation_low_deg}
    For $S \subseteq [M]$, let $\phi_S(Y)$ be as defined in \eqref{eqn:phi_defn}, and let $V(S) \subseteq [n]$ denote the vertex set of the hypergraph induced by $S$.
    Then we have
    \[\EE_\calP[\phi_S(Y)] = \rho^{|V(S)|}\left(\frac{p-q}{\sigma}\right)^{|S|}.\]
\end{lemma}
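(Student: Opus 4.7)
The plan is to condition on the planted set $Z$ and use independence of the hyperedges under $\calP$ given $Z$.

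First I would observe that under $\calP$, conditionally on $Z$, the variables $\{Y_e\}_{e\in[M]}$ are mutually independent with $Y_e \sim \BER(p)$ if $e \subseteq Z$ and $Y_e \sim \BER(q)$ if $e \not\subseteq Z$. Consequently, for the standardized variable $\tilde Y_e = (Y_e - q)/\sigma$, one computes
\[
\EE_\calP[\tilde Y_e \mid Z] = \begin{cases} (p-q)/\sigma & \text{if } e \subseteq Z,\\ 0 & \text{if } e \not\subseteq Z.\end{cases}
\]
By conditional independence,
\[
\EE_\calP[\phi_S(Y) \mid Z] = \prod_{e\in S} \EE_\calP[\tilde Y_e \mid Z].
\]
This product vanishes unless every $e \in S$ satisfies $e \subseteq Z$, in which case it equals $((p-q)/\sigma)^{|S|}$.

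Next I would note the elementary set-theoretic equivalence: $e \subseteq Z$ for every $e \in S$ if and only if $V(S) \subseteq Z$, since $V(S) = \bigcup_{e \in S} e$. Therefore
\[
\EE_\calP[\phi_S(Y) \mid Z] = \left(\frac{p-q}{\sigma}\right)^{|S|} \mathbbm{1}\{V(S) \subseteq Z\}.
\]

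Finally, I would take expectation over $Z$. Since $V(S) \subseteq Z$ amounts to $z_i = 1$ for every $i \in V(S)$, and the $z_i$ are i.i.d.\ $\BER(\rho)$, we have $\calP(V(S) \subseteq Z) = \rho^{|V(S)|}$. Combining with the tower property yields the claim. No step here is an obstacle; the only thing to be careful about is the equivalence $\{e \subseteq Z \ \forall e\in S\} \Longleftrightarrow \{V(S)\subseteq Z\}$, which makes the vertex count $|V(S)|$ (rather than something like $|\bigcup S|$ written differently) appear in the exponent.
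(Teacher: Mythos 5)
Your proof is correct and follows the same route as the paper: condition on $Z$, use conditional independence to factor the expectation across hyperedges, observe that the product is nonzero iff $V(S)\subseteq Z$, and average over $Z$ via $\calP(V(S)\subseteq Z)=\rho^{|V(S)|}$. There is nothing to add.
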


\begin{proof}
    We note the following:
    \[\EE_\calP[\phi_S(Y)] = \EE_Z\left[\EE_\calP[\phi_S(Y)|Z]\right].\]
    We note that $Y_e$ and $Y_f$ are conditionally independent for $e \neq f$, given $Z$. In particular, we have
    \[\EE_\calP[\phi_S(Y)|Z] = \prod_{e \in S}\EE_\calP[\tilde Y_e|Z].\]
    Furthermore, by definition of $\calP$, we have
    \[\EE_\calP[\tilde Y_e|Z] = \left\{\begin{array}{cc}
        \frac{p-q}{\sigma} & e \subseteq Z \\
        0 & e \not \subseteq Z
    \end{array}\right.\]
    With this in hand, we have
    \[\EE_\calP[\phi_S(Y)|Z] = \left(\frac{p-q}{\sigma}\right)^{|S|}\bbone{V(S) \subseteq Z},\]
    from where we get
    \[\EE_\calP[\phi_S(Y)] = \rho^{|V(S)|}\left(\frac{p-q}{\sigma}\right)^{|S|},\]
    as desired.
\end{proof}

Let $S_{\ell, m}$ be the set of all edge-induced subhypergraphs of $K_n^r$ having $\ell$ vertices and $m$ edges, or, more formally,
\begin{equation}
S_{\ell, m} \defeq \{ S \subseteq K_n^r \; : \; |V(S)| = \ell , \, |E(S)| = m\} . 
\label{eq:def-s-l-m}
\end{equation}
The following inequalities will be used throughout the proofs and hold for $n$ sufficiently large:
\begin{equation}\label{eqn:common_inequalities}
   \frac{q}{e} \leq \sigma^2 \leq q, \quad p - q \geq \frac{p}{e}.
\end{equation}

\begin{proof}[Proof of Proposition~\ref{prop:lower-large-gamma}]
Note that as the graphs in $S_{\ell, m}$ are edge induced and we only consider $S\subseteq [M]$ such that $|S| \leq D$, we have
\[r \leq \ell \leq rD, \quad \ell/r \leq m \leq D.\]
With the result of Lemma~\ref{lemma:P_expectation_low_deg}, we can conclude
\[\|L_{\leq D}\|^2 = 1 + \sum_{\ell = r}^{rD}\sum_{m = \ell/r}^D \sum_{S \in S_{\ell, m}}\rho^{2\ell} \left(\frac{p - q}{\sigma}\right)^{2m} ,\]
where the summand $1$ comes from the term $S = \varnothing$.
Let us compute an upper bound for $|S_{\ell, m}|$. 
To do so, we first consider the possible choices for the $\ell$ vertices $V(S)$ among all $n$ vertices, and then the $m$ edges $E(S)$ among all $\binom{\ell}{r}$ possible edges.
We have
\begin{align}\label{eqn:Slm_size}
    |S_{\ell, m}| \leq \binom{n}{\ell} \binom{\binom{\ell}{r}}{m} \leq n^{\ell}\binom{\ell}{r}^m \leq n^{\ell}\left(\frac{e\ell}{r}\right)^{rm} \leq n^\ell (eD)^{rm},
\end{align}
since $\ell \leq rD$.
Furthermore, as $p > q$, we have
\((p-q)^2 \leq p^2.\)
We now have
\begin{align*}
    \|L_{\leq D}\|^2 &\leq 1 + \sum_{\ell = r}^{rD}\sum_{m = \ell/r}^Dn^\ell\,(eD)^{rm}\rho^{2\ell} \left(\frac{p - q}{\sigma}\right)^{2m} \\
    &\leq 1 + \sum_{\ell = r}^{rD}(n\rho^2)^\ell\sum_{m = \ell/r}^D\left(\frac{e^{r+1}D^{r}\,p^2}{q}\right)^m \\
    &\leq 1 + \sum_{\ell = r}^{rD} n^{(2\gamma -1)\ell}\sum_{m = \ell/r}^D n^{(\beta - 2\alpha + o(1))m},
\end{align*}
where we use \eqref{eqn:common_inequalities} and the assumptions on the parameters.
By \eqref{eqn:large_gamma_lb}, we have
$\beta - 2 \alpha + o(1) < -r(2\gamma-1) - \delta$ for a fixed constant $\delta = \delta(\alpha, \beta, \gamma, r) > 0$.
It follows that
\begin{align*}
    \|L_{\leq D}\|^2 \leq 1 + \sum_{\ell = r}^{rD} n^{(2\gamma -1)\ell}\sum_{m = \ell/r}^D
    n^{(-r(2\gamma - 1) - \delta)m}.
\end{align*}
Since $\gamma \geq 1/2$, we have $n^{-r(2\gamma - 1) - \delta} < 1/2$ for $n$ sufficiently large. Therefore,
\begin{align*}
    \|L_{\leq D}\|^2 &\leq 1 + 2\sum_{\ell = r}^{rD} n^{(2\gamma -1)\ell}\,n^{(-r(2\gamma - 1) - \delta)\ell/r} = 1 + 2\sum_{\ell = r}^{rD} n^{-\delta \ell/r} \le 1 + 4 n^{-\delta},
\end{align*}
as desired.
\end{proof}

\subsection{Lower bound for small $\gamma$}\label{subsection:small_gamma_lb}

We now consider the regime $\gamma < 1/2$ and show that the desired threshold is $\alpha > \beta \gamma$.
In this case, the degree-$D$ likelihood ratio $L_{\le D}$ between $\calP$ and $\calQ$ has unbounded norm due to certain rare events under $\calP$.
To mitigate this issue, we modify our approach by conditioning on the complement of these rare events.
We then study the degree-$D$ likelihood ratio $L'_{\le D}$ between the conditional distribution $\calP'$ and $\calQ$.

\subsubsection{A high-probability event}\label{subsubsection:prob_bound}

Recall that $S_{\ell, m}$ is defined in \eqref{eq:def-s-l-m}.
Fix an arbitrarily small constant $\delta = \delta(\alpha, \beta, \gamma) > 0$ to be chosen later.
For $\ell \in \N$, define
\begin{equation}\label{eqn:m_ell}
m_\ell \defeq \left\lceil\ell\left(\frac{\gamma}{\alpha} + \delta\right) \right\rceil.
\end{equation}
Moreover, we define a set of pairs of integers
\begin{equation*}
I \defn \{ (\ell, m) \in \N^2 \; : \; m_\ell \le m \le D , \, S_{\ell,m} \ne \varnothing \} .
\end{equation*}
Let $\calC \defeq H[Z]$ denote the subhypergraph of the observed graph $H$ induced by $Z$.
In other words, $\calC$ is the planted subhypergraph to be detected.
We define the following events:
\begin{align}
    \calE_{\ell, m} &\defeq \{ \exists S\in S_{\ell, m} \text{ such that } S \subseteq \calC \} , \notag \\
    \calE &\defeq \bigcap_{(\ell, m) \in I}\calE_{\ell, m}^c , \label{eq:def-cale}
\end{align}
where $\calE_{\ell,m}^c$ is the complement of the event $\calE_{\ell,m}$.
In particular, under $\calE$, every subset of edges $S$ that induces a dense hypergraph does \emph{not} appear in $\calC$, where ``dense'' means that the edge-to-vertex ratio of $S$ is at least $\frac{\gamma}{\alpha} + \delta$.
We define a distribution $\calP'$ by 
\[\calP'(Y) \defeq \frac{\calP(Y)\,\bbone{Y \in \calE}}{\calP(\calE)}.\]
We will show $\calP(\calE) = 1 - o(1)$.
To assist with the bound, we start with the following lemma.

\begin{lemma}\label{lemma:clique_size}
    Under model $\calP$, we have
    \[\frac{n^\gamma}{2}\leq |Z| \leq \frac{3\,n^\gamma}{2},\]
    with probability at least $1 - 2\exp\left(-cn^\gamma\right)$ for some absolute constant $c > 0$.
\end{lemma}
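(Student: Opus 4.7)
The plan is to observe that $|Z| = \sum_{i=1}^n z_i$ is a sum of $n$ i.i.d.\ Bernoulli random variables with parameter $\rho = n^{\gamma-1}$, so $|Z| \sim \BIN(n, n^{\gamma-1})$ with mean $\mu \defeq n\rho = n^\gamma$. The claimed bounds $\frac{n^\gamma}{2} \le |Z| \le \frac{3 n^\gamma}{2}$ are exactly the statement that $|Z|$ lies within a constant multiplicative factor of its mean, which is immediate from the standard multiplicative Chernoff bound.

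More concretely, I would invoke the two-sided multiplicative Chernoff inequality for a binomial random variable: for $t \in (0,1)$,
\[
\calP\bigl( |Z| \ge (1+t)\mu \bigr) \le \exp\bigl( - t^2 \mu / 3 \bigr), \qquad \calP\bigl( |Z| \le (1-t)\mu \bigr) \le \exp\bigl( - t^2 \mu / 2 \bigr).
\]
Applying these with $t = 1/2$ and $\mu = n^\gamma$ yields
\[
\calP\bigl( |Z| \ge \tfrac{3}{2} n^\gamma \bigr) \le \exp(-n^\gamma/12), \qquad \calP\bigl( |Z| \le \tfrac{1}{2} n^\gamma \bigr) \le \exp(-n^\gamma/8),
\]
and a union bound gives failure probability at most $2 \exp(-c n^\gamma)$ with $c = 1/12$.

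There is no real obstacle here: the statement is purely a concentration fact about a binomial, independent of any hypergraph structure. The only care required is bookkeeping the constant $c$ and noting that the bound holds for all sufficiently large $n$ (so that $\mu = n^\gamma \to \infty$ and the Chernoff bound is informative). I would state and cite the multiplicative Chernoff inequality (e.g., as a standard fact), plug in the values above, and conclude in a short paragraph.
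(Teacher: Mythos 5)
Your proof is correct and matches the paper's approach exactly: both recognize $|Z| \sim \BIN(n,\rho)$ with mean $n\rho = n^\gamma$ and apply a two-sided multiplicative Chernoff bound with deviation factor $1/2$ (the paper cites~\cite[Exercise~2.3.5]{vershynin2018high}, while you state the bound explicitly, but the substance is the same).
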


\begin{proof}
    Note that $|Z| = \sum_{i = 1}^n z_i$ is the sum of $n$ i.i.d.\ $\BER(\rho)$ variables, i.e., $|Z| \sim \BIN(n, \rho)$. It follows from the Chernoff bound \cite[Exercise~2.3.5]{vershynin2018high} that 
    \[\calP\left(|Z| \notin \left(1 \pm \frac{1}{2}\right)\EE_\calP[|Z|]\right) \leq 2\exp\left(-C\,\frac{\EE_\calP[|Z|]}{4}\right),\]
    for some absolute constant $C > 0$.
    The lemma follows as $\EE_\calP[|Z|] = n \rho = n^\gamma$.
\end{proof}

\begin{lemma}
We have $\calP(\calE) = 1 - o(1)$.
\end{lemma}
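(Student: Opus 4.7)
The plan is to bound $\calP(\calE^c)$ by a union bound over $(\ell, m) \in I$ combined with a first-moment estimate for each event $\calE_{\ell, m}$. For any fixed $S \in S_{\ell, m}$, we have $\calP(S \subseteq \calC) = \rho^\ell p^m$, since all $\ell$ vertices of $V(S)$ must lie in $Z$ with probability $\rho^\ell$ and, given that, all $m$ edges of $S$ must be present with probability $p^m$. Markov then gives $\calP(\calE_{\ell, m}) \leq |S_{\ell, m}| \rho^\ell p^m$, and the standard bound $|S_{\ell,m}| \leq \binom{n}{\ell}\binom{\binom{\ell}{r}}{m}$ reduces the problem to controlling the double sum $\sum_\ell \binom{n}{\ell} \rho^\ell \sum_{m = m_\ell}^{D} \binom{\binom{\ell}{r}}{m} p^m$.

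The first substantive step will be to collapse the inner sum over $m$ to essentially its smallest term. Since any $(\ell, m) \in I$ forces $\ell \leq r m \leq r D = n^{o(1)}$, we have $\binom{\ell}{r} p \leq \ell^r n^{-\alpha} = n^{-\alpha + o(1)} = o(1)$. Consequently the ratio of successive summands $\binom{\binom{\ell}{r}}{m+1} p^{m+1} / \binom{\binom{\ell}{r}}{m} p^m = p(\binom{\ell}{r} - m)/(m + 1)$ is at most $1/2$ for all $m \geq 1$ and $n$ large, so the inner sum is at most $2 \binom{\binom{\ell}{r}}{m_\ell} p^{m_\ell}$. It therefore suffices to show $\sum_\ell T_\ell = o(1)$, where $T_\ell \defeq \binom{n}{\ell} \rho^\ell \binom{\binom{\ell}{r}}{m_\ell} p^{m_\ell}$.

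The second step will be the central calculation: the choice $m_\ell = \lceil \ell(\gamma/\alpha + \delta) \rceil$ is precisely calibrated so that $\alpha m_\ell \geq \gamma \ell + \alpha \delta \ell$, giving the key cancellation $n^{\gamma \ell} \cdot n^{-\alpha m_\ell} \leq n^{-\alpha \delta \ell}$. Using $\binom{n}{\ell} \leq (e n / \ell)^\ell$ (which combines with $\rho^\ell$ to contribute $n^{\gamma \ell}$, since $n \rho = n^\gamma$) and $\binom{\binom{\ell}{r}}{m_\ell} \leq (e \binom{\ell}{r} / m_\ell)^{m_\ell} \leq (c \ell^{r - 1})^{m_\ell}$ (valid since $m_\ell \geq \ell \gamma / \alpha$, making $\binom{\ell}{r}/m_\ell$ a polynomial in $\ell$), this yields schematically $T_\ell \leq (e / \ell)^\ell (c \ell^{r-1})^{m_\ell} n^{-\alpha \delta \ell}$ for a constant $c = c(\alpha, \gamma, r)$.

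The final step will be to verify that the polynomial-in-$\ell$ factors are dominated by $n^{-\alpha \delta \ell}$ uniformly over $r \leq \ell \leq r D$. Since $D = n^{o(1)}$, we have $\log \ell \leq \log(r D) = o(\log n)$, so $m_\ell \log(c \ell^{r-1}) = \ell \cdot o(\log n)$, which is absorbed by $-\alpha \delta \ell \log n / 2$ for $n$ large. A geometric sum then gives $\sum_\ell T_\ell = O(n^{-\alpha \delta r / 2}) = o(1)$, completing the argument. The main obstacle is this last uniformity: the combinatorial factor $\binom{\binom{\ell}{r}}{m_\ell}$ can in principle be super-polynomial in $n$, and one must rely on $D = n^{o(1)}$ to keep $\log \ell$ small enough that the tailored exponential decay $n^{-\alpha \delta \ell}$ wins for every $\ell$ in the range. (One can alternatively invoke Lemma~\ref{lemma:clique_size} to first restrict to $|Z| \leq 3 n^\gamma / 2$, replacing $\binom{n}{\ell}\rho^\ell$ by $\binom{3 n^\gamma/2}{\ell}$ at cost $o(1)$, which makes the cancellation slightly more transparent.)
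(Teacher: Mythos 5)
Your proof is correct and follows essentially the same first-moment / union-bound strategy as the paper, with the key being that $m_\ell$ is calibrated so that $\rho^\ell p^{m_\ell} \le n^{-\alpha\delta\ell}$. The only notable difference is that you work unconditionally (computing $\calP(S\subseteq\calC)=\rho^\ell p^m$ directly), which avoids the paper's preliminary conditioning on $|Z|\approx n^\gamma$ via Lemma~\ref{lemma:clique_size} and is a mild simplification; you also collapse the geometric sum over $m$ via a ratio test before bounding the binomial, whereas the paper uses the cruder $\binom{\binom{\ell}{r}}{m}\le (eD)^{rm}$ and collapses afterwards, but both yield the same $O(n^{-\alpha\delta r/2})$ bound.
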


\begin{proof}
For any events $A$ and $B$, we have
\(\P[A] \leq \P[A\mid B] + \P[B^c].\)
We let $A$ be $\calE^c$ and $B$ be the event that $|Z|$ is bounded as in Lemma \ref{lemma:clique_size}.
As a result,
\begin{equation}
\label{eq:E-complement}
\calP(\calE^c) \le \calP\left( \calE^c \; \Big| \; \frac{n^\gamma}{2}\leq |Z| \leq \frac{3\,n^\gamma}{2} \right) + 2 \exp \left( - c\,n^\gamma \right) .
\end{equation}

We now condition on a realization of $Z$ such that $\frac{n^\gamma}{2}\leq |Z| \leq \frac{3\,n^\gamma}{2}$.
We define the following set to assist with our proof:
\[S_{\ell, m}(Z) \defeq \set{S\in S_{\ell,m}\,:\, V(S) \subseteq Z}.\]
An identical argument as that in \eqref{eqn:Slm_size} shows that
\[|S_{\ell, m}(Z)| \leq \left(\frac{3}{2}n^{\gamma}\right)^\ell\,(eD)^{rm}.\]
With this in hand, we have
\[\calP(\calE_{\ell, m}\mid Z) \leq \sum_{S \in S_{\ell, m}(Z)}\calP( S \subseteq E(\calC)\mid Z) \leq \left(\frac{3}{2}n^{\gamma}\right)^\ell\,(eD)^{rm}\max_{S \in S_{\ell, m}(Z)}\calP(S \subseteq E(\calC)\mid Z).\]
For $S \in S_{\ell, m}(Z)$, as $Y_e$ and $Y_f$ are conditionally independent given $Z$, we have
\[\calP(S \subseteq E(\calC)\mid Z) = p^m = n^{-\alpha m} .\]
Plugging in this value above, we have
\[
\calP(\calE_{\ell, m}\mid Z)
\leq \left(\frac{3}{2}n^{\gamma}\right)^\ell\,(eD)^{rm}\,n^{-\alpha\,m} 
\leq \, n^{\gamma \ell - (\alpha - o(1))\,m}.
\]

Let us now bound $\calP(\calE^c\mid Z) \le \sum_{(\ell,m) \in I} \calP(\calE_{\ell,m}\mid Z)$. 
We have
\begin{align*}
    \calP(\calE^c\mid Z) \leq \sum_{\ell = r}^{rD}\sum_{m = m_\ell}^{D} n^{\gamma \ell  - (\alpha - o(1))\,m}
    = \sum_{\ell = r}^{rD} n^{\gamma \ell}
    \sum_{m = m_\ell}^{D} n^{ - (\alpha - o(1))\,m}
    \le 2 \sum_{\ell = r}^{rD} n^{\gamma \ell} n^{ - (\alpha - o(1))\,m_\ell}.
\end{align*}
By \eqref{eqn:m_ell}, we have $m_\ell \ge \ell( \frac{\gamma}{\alpha} + \delta)$, so
\begin{align*}
    \calP(\calE^c\mid Z) \leq 2 \sum_{\ell = r}^{rD} n^{-(\delta \alpha - o(1))\, \ell}
    \le 4 n^{- \delta \alpha r/2} .
\end{align*}
This combined with \eqref{eq:E-complement} completes the proof.
\end{proof}

\subsubsection{Bounding the low-degree norm}\label{subsubsection:expectation_bound}

The main result of this subsection consists in controlling the norm of the degree-$D$ likelihood ratio between $\calP'$ and $\calQ$ defined in \eqref{eq:conditional-degree-d-norm}.

\begin{proposition}
\label{prop:lower-small-gamma}
Suppose $p = n^{-\alpha}$, $q = n^{-\beta}$, and $\rho = n^{\gamma - 1}$ with fixed $r \ge 2$, $0 < \alpha < \beta < r-1$, and $\gamma \in (0,1)$ such that
\begin{align}\label{eqn:small_gamma_lb}
    \gamma < \frac{1}{2}, \quad \alpha > \beta\,\gamma .
\end{align}
If $D = n^{o(1)}$, then we have
$$
\|L_{\leq D}'\|^2 = 1+o(1).
$$
\end{proposition}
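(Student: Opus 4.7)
The plan is to bound $\|L'_{\le D}\|^2 = \sum_{|S|\le D}(\E_{\calP'}[\phi_S])^2$ directly. Since $\calP(\calE) = 1 - o(1)$ by the lemma just proved, it suffices to control $\E_\calP[\phi_S \bbone{\calE}]$ for each $S$. A first key observation: by conditioning on $Z$ and using that $\bbone{\calE}$ depends only on $Z$ and the edges within $Z$, while the factors $\tilde Y_e$ for $e \not\subseteq Z$ are independent with mean zero under $\calP$ (exactly as in the null), one sees that $\E_\calP[\phi_S \bbone{\calE}\mid Z] = 0$ unless $V(S) \subseteq Z$. This yields the natural vertex factor $\rho^{|V(S)|}$ after averaging over $Z$.

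The central idea is a bipartition of the sum over $S$ into \emph{sparse} $S$ with $(|V(S)|,|S|)\notin I$ and \emph{dense} $S$ with $(|V(S)|,|S|)\in I$. For sparse $S$ the bound $|\E_{\calP'}[\phi_S]| \le (1+o(1))\,\rho^{|V(S)|}\bigl((p-q)/\sigma\bigr)^{|S|}$ holds with a Cauchy--Schwarz correction of order $\sqrt{\E_\calP[\phi_S^2]\,\calP(\calE^c)}$ that is negligible. Summing squares over sparse $S$ mimics the calculation of Proposition~\ref{prop:lower-large-gamma}: using $|S_{\ell,m}| \le n^\ell (eD)^{rm}$ together with the cutoff $m < m_\ell = \lceil \ell(\gamma/\alpha+\delta)\rceil$ produces a per-vertex exponent of $-1+\beta\gamma/\alpha + O(\delta)$, strictly negative thanks to $\alpha>\beta\gamma$ and small $\delta$, so the sparse contribution equals $1+o(1)$ (the $1$ coming from $S=\varnothing$).

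For dense $S$ the unconditional sum diverges, so the conditioning must be exploited via the structural fact that when $V(S)\subseteq Z$ and $(|V(S)|,|S|)\in I$, the event $\calE$ forces $S\not\subseteq \calC$; hence $\bbone{\calE} = \bbone{\calE}\bigl(1 - \prod_{e\in S} Y_e\bigr)$. Splitting
\[
\E_\calP[\phi_S \bbone{\calE} \mid Z] = \E_\calP\bigl[\phi_S\,(1 - \textstyle\prod_{e\in S} Y_e)\,\big|\, Z\bigr] - \E_\calP\bigl[\phi_S\, \bbone{\calE^c}\,(1 - \textstyle\prod_{e\in S} Y_e) \,\big|\, Z\bigr],
\]
the first piece evaluates (by conditional independence of $\{Y_e : e \subseteq Z\}$) to $((p-q)/\sigma)^{|S|} - (p(1-q)/\sigma)^{|S|}$, which by first-order Taylor expansion (valid since $|S|\cdot q/p = n^{\alpha-\beta+o(1)}\to 0$) has magnitude of order $|S|\cdot(q/p)\cdot((p-q)/\sigma)^{|S|}$ --- smaller than the naive $((p-q)/\sigma)^{|S|}$ by the factor $|S|\cdot q/p = n^{\alpha-\beta+o(1)}$. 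Squaring gives an extra damping of $n^{2(\alpha-\beta)+o(1)}$, which combined with the dense cutoff $m\ge m_\ell$ produces a convergent geometric sum over dense $(\ell,m)$.

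The main obstacle is the second piece, $\E_\calP[\phi_S\bbone{\calE^c}(1-\prod_{e\in S} Y_e)\mid Z]$: a direct Cauchy--Schwarz bound by $\sqrt{\calP(\calE^c\mid Z)}\cdot\sqrt{\E_\calP[\phi_S^2\mid Z]}$ is too crude when $D$ grows, because $\sqrt{\E_\calP[\phi_S^2\mid Z]} \approx (p/q)^{|S|/2}$ blows up. The correct approach is to expand $\bbone{\calE^c}$ via a union bound over the forbidden subhypergraphs $S' \in \bigcup_{(\ell',m')\in I} S_{\ell',m'}(Z)$, giving
\[
\bigl|\E_\calP\bigl[\phi_S\,(1-\textstyle\prod_{e\in S} Y_e)\,\bbone{\calE^c}\mid Z\bigr]\bigr| \le \sum_{S'}\bigl|\E_\calP\bigl[\phi_S\,(1-\textstyle\prod_{e\in S} Y_e)\prod_{e\in S'} Y_e\mid Z\bigr]\bigr|,
\]
and each summand factorizes by independence as an explicit product over the overlap pattern $(|S\cap S'|,|S'\setminus S|,|S\setminus S'|)$. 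A careful combinatorial count stratified by this pattern, combined with the gain $\rho^{\ell'} p^{m'} \le n^{-\alpha\delta\ell'}$ coming from $(\ell',m') \in I$, should yield the final $o(1)$ bound. Managing this combinatorial bookkeeping --- in both subcases $\beta \le 2\alpha$ and $\beta > 2\alpha$, where the inner $m$-sum is dominated at $m=m_\ell$ versus $m=D$ --- and using the flexibility to take $\delta>0$ arbitrarily small to absorb the $(eD)^{rm} = n^{o(m)}$ combinatorial slack is the technical heart of the proof, and is what locks in the precise threshold $\alpha=\beta\gamma$.
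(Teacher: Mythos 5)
Your high-level structure mirrors the paper's: write $\E_{\calP'}[\phi_S] = (1+o(1))\,\E_\calP[\phi_S\bbone{\calE}]$, note that conditioning on $Z$ kills the contribution unless $V(S)\subseteq Z$, and split the sum over $S$ according to whether $(|V(S)|,|S|)$ lies in $I$. However, both of your per-$S$ bounds have genuine gaps.

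For sparse $S$, the proposed bound $|\E_{\calP'}[\phi_S]| \le (1+o(1))\rho^{|V(S)|}((p-q)/\sigma)^{|S|} + \sqrt{\E_\calP[\phi_S^2]\,\calP(\calE^c)}$ is too crude to sum. The correction term does \emph{not} disappear after squaring and summing over $S$: since $\E_\calP[\phi_S^2] \gtrsim (1-\rho)^{|V(S)|} \approx 1$ for every $S$, one has
\[
\calP(\calE^c)\sum_{S\ \mathrm{sparse}}\E_\calP[\phi_S^2] \;\gtrsim\; \calP(\calE^c)\cdot\#\{S\ \mathrm{sparse}\},
\]
and the number of sparse $S$ grows like a polynomial in $n$ raised to a power comparable to $D$, while $\calP(\calE^c)$ is only $n^{-\Theta(\delta)}$. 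So this contribution is enormous, not $o(1)$. The paper sidesteps this entirely: instead of subtracting off $\bbone{\calE^c}$ and invoking Cauchy--Schwarz (which introduces an additive error), it applies Jensen to get $|\E_\calP[\phi_S\bbone{\calE}\mid Z]| \le \E_\calP[|\phi_S|\,\bbone{\calE}\mid Z] \le \E_\calP[|\phi_S|\mid Z] \le (2p/\sigma)^{|S|}$, a purely multiplicative bound with no correction term.

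For dense $S$, the identity $\bbone{\calE} = (1-\prod_e Y_e) - \bbone{\calE^c}(1-\prod_e Y_e)$ on $\{V(S)\subseteq Z\}$ is correct, but the resulting ``main term'' $\rho^{\ell}[((p-q)/\sigma)^{m} - (p(1-q)/\sigma)^{m}]$ is not small enough. Its square is $\approx \rho^{2\ell}(p^2/q)^m\,(mq/p)^2 = n^{2(\gamma-1)\ell + (\beta-2\alpha)m + 2(\alpha-\beta)+o(1)}$; when $\beta \ge 2\alpha$ the exponent in $m$ is nonnegative, and summing over dense $(\ell,m)$ with $m$ up to $D$ diverges (the $\binom{\ell}{r}\ge m$ constraint only forces $\ell\gtrsim m^{1/r}$, which is too weak to compensate). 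The paper's Lemma~\ref{lemma:bad_expectation_bound} captures much more: on $\calE$, \emph{at least $m-m_\ell+1$} edges of a dense $S$ must be absent, and each absent edge contributes a factor $q/\sigma\approx\sqrt q$ rather than $\approx p/\sqrt q$; a union bound over which $m-m_\ell+1$ edges are absent yields $\rho^\ell\binom{m}{m_\ell-1}q^{m-m_\ell+1}(2p)^{m_\ell-1}/\sigma^m$, whose squared sum over $m$ converges (exponent in $m$ is $-\beta+o(1)<0$). Your ``second piece'' (the union bound over forbidden $S'$ and the stratification by overlap pattern) is only sketched and acknowledged as incomplete, but even setting that aside, the first piece already fails when $\beta\ge 2\alpha$, so the decomposition itself must be changed, not just the bookkeeping for the error term.
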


To prove the proposition, we start from \eqref{eq:conditional-degree-d-norm}.
Note the following as a result of \S\ref{subsubsection:prob_bound}:
\begin{equation}
\EE_{\calP'}[\phi_S(Y)] = \frac{1}{\calP(\calE)}\,\EE_{\calP}[\phi_S(Y)\bbone{Y \in \calE}] = (1+o(1))\EE_{\calP}[\phi_S(Y)\bbone{Y \in \calE}]. \label{eq:p-prime-to-p-exp}
\end{equation}
It is now enough to consider the final term.
We first condition on the outcome $Z$ to get
\begin{align*}
    \EE_\calP[\phi_S(Y)\bbone{Y \in \calE}] &= \EE_Z[\EE_\calP[\phi_S(Y)\bbone{Y \in \calE} \mid Z]].
\end{align*}
Consider a realization $Z$.
Suppose there exists $e \in S$ such that $e \not \subseteq Z$. 
Since $\calE$ is determined by the edges in $\calC$, $Y_e$ is conditionally independent of $\phi_{S\setminus\{e\}}(Y)\bbone{Y \in \calE}$, given $Z$.
In particular, we have
\[\EE_\calP[\phi_S(Y)\bbone{Y \in \calE} \mid Z] = \EE_\calP[\tilde Y_e \mid Z] \,\EE_\calP[\phi_{S\setminus\{e\}}(Y)\bbone{Y \in \calE} \mid Z] = 0,\]
as $\EE_\calP[\tilde Y_e \mid Z] = 0$ for $e \not\subseteq Z$.
It follows that
\begin{align}\label{eqn:expectation_bound}
    \EE_\calP[\phi_S(Y)\bbone{Y \in \calE}] &= \EE_Z[\EE_\calP[\phi_S(Y)\bbone{Y \in \calE} \mid Z]\,\bbone{V(S) \subseteq Z}].
\end{align}
We split the set of $S \subseteq K_n^r$ with $|S| \le D$ into a set of ``bad'' subgraphs and a set of ``good'' subgraphs, defined respectively as
$$
\mathcal B \defeq \bigcup_{(\ell, m) \in I} S_{\ell,m} , \quad
\mathcal G \defeq \{S \subseteq K_n^r \; : \; |S| \le D\} \setminus \mathcal B .
$$
Let us first consider a good subgraph $S$. We prove the following lemma.

\begin{lemma}\label{lemma:good_expectation_bound}
    Let $S \in \mathcal G$ be a good subgraph on $\ell$ vertices and $m$ edges. We have
     \[|\EE_\calP[\phi_S(Y)\bbone{Y \in \calE}]| \leq \rho^\ell\left(\frac{2p}{\sigma}\right)^m.\]
\end{lemma}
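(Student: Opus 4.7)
The plan is to bound the expectation using the trivial estimate $\bbone{Y \in \calE} \le 1$ followed by a direct first-moment computation, with the conditioning on $Z$ handled via~\eqref{eqn:expectation_bound}. That equation already restricts attention to realizations of $Z$ with $V(S) \subseteq Z$, so it suffices to control $|\EE_\calP[\phi_S(Y)\bbone{Y \in \calE}\mid Z]|$ uniformly on this event. Applying the triangle inequality, discarding the indicator, and using the conditional independence of $(Y_e)_e$ given $Z$, the plan gives
\[
|\EE_\calP[\phi_S(Y)\bbone{Y \in \calE}\mid Z]|
\le \EE_\calP[|\phi_S(Y)|\mid Z] = \prod_{e \in S} \EE_\calP[|\tilde Y_e|\mid Z].
\]

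Next, I would compute each factor. Since $V(S) \subseteq Z$ forces $e \subseteq Z$ for every $e \in S$, we have $Y_e \sim \BER(p)$ conditionally on $Z$, and hence $|\tilde Y_e|$ takes the value $(1-q)/\sigma$ with probability $p$ and $q/\sigma$ with probability $1-p$. A one-line calculation yields $\EE_\calP[|\tilde Y_e|\mid Z] = (p + q - 2pq)/\sigma$, which can be absorbed into $2p/\sigma$ using $p \ge q$ (guaranteed by $\alpha < \beta$). Multiplying over the $m$ edges of $S$ produces $(2p/\sigma)^m$ as a uniform bound on the conditional expectation whenever $V(S) \subseteq Z$.

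To conclude, I would plug this bound back into \eqref{eqn:expectation_bound} and use $\calP(V(S) \subseteq Z) = \rho^{\ell}$, which follows immediately from the independence of the $z_i \sim \BER(\rho)$, to obtain the claimed $\rho^\ell (2p/\sigma)^m$. No step should present a real obstacle: this is a direct moment estimate, and the only delicate point is that we can afford to throw away the factor $\bbone{Y \in \calE}$ entirely. Notably, the lemma's statement does not exploit the goodness of $S$ at all; the ``good'' assumption $m < m_\ell$ enters only later, when we sum this bound over $S \in \mathcal G$ to show the corresponding contribution to $\|L'_{\le D}\|^2$ is $o(1)$, which is precisely the regime in which this loose first-moment bound remains adequate.
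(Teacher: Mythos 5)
Your proposal is correct and follows essentially the same route as the paper's proof: start from \eqref{eqn:expectation_bound}, discard the indicator $\bbone{Y\in\calE}$ via a triangle/Jensen inequality, exploit conditional independence given $Z$ to reduce to a per-edge first moment $\EE_\calP[|\tilde Y_e|\mid Z]=(p(1-q)+(1-p)q)/\sigma \le 2p/\sigma$, and then integrate out $Z$ using $\calP(V(S)\subseteq Z)=\rho^\ell$. Your closing observation — that the goodness of $S$ plays no role in this bound and matters only when the bound is summed — is also accurate and matches the paper's structure.
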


\begin{proof}
    Note the following as a result of \eqref{eqn:expectation_bound} and Jensen's inequality:
    \begin{equation}|\EE_\calP[\phi_S(Y)\bbone{Y \in \calE}]| \leq \EE_Z[\EE_\calP[|\phi_S(Y)|\bbone{Y \in \calE} \mid Z]\,\bbone{V(S) \subseteq Z}]. \label{eq:phi-s-jensen}
    \end{equation}
    Let us consider $\EE_\calP[|\phi_S(Y)|\bbone{Y \in \calE} \mid Z]$ for $V(S) \subseteq Z$.
    Since $|\phi_S(Y)| \geq 0$ we have
    \begin{align*}
        \EE_\calP[|\phi_S(Y)|\bbone{Y \in \calE} \mid Z] &\leq \EE_\calP[|\phi_S(Y)|\mid Z] 
        = \left(\frac{p(1-q) + (1-p)q}{\sigma}\right)^m 
        \leq \left(\frac{2p}{\sigma}\right)^m,
    \end{align*}
    where the last step follows since $q < p$.
    We now have:
    \begin{align*}
        |\EE_\calP[\phi_S(Y)\bbone{Y \in \calE}]|
        \leq \left(\frac{2p}{\sigma}\right)^m\EE_Z\left[\bbone{V(S) \subseteq Z}\right] 
        = \rho^\ell\left(\frac{2p}{\sigma}\right)^m,
    \end{align*}
    as desired.
\end{proof}

Next, let us consider a bad subgraph $S$. We will prove the following lemma.

\begin{lemma}\label{lemma:bad_expectation_bound}
    Let $S \in \mathcal B$ be a bad subgraph on $\ell$ vertices and $m$ edges. We have
    \[|\EE_\calP[\phi_S(Y)\bbone{Y \in \calE}]| \leq \rho^\ell\,\binom{m}{m_\ell-1}\,\frac{q^{m - m_\ell + 1}\,(2p)^{m_\ell -1}}{\sigma^m}.\]
\end{lemma}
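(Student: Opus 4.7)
The argument adapts the proof of Lemma~\ref{lemma:good_expectation_bound} to exploit the conditioning event $\calE$. Starting from \eqref{eqn:expectation_bound} and applying Jensen's inequality to pass the absolute value inside as in \eqref{eq:phi-s-jensen}, one gets
\[|\EE_\calP[\phi_S(Y)\bbone{Y \in \calE}]| \leq \EE_Z\!\left[\bbone{V(S) \subseteq Z} \, \EE_\calP[|\phi_S(Y)|\,\bbone{Y \in \calE} \mid Z]\right],\]
so it suffices to bound the inner conditional expectation on the event $\{V(S) \subseteq Z\}$.

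The key new ingredient is a combinatorial observation: under $\calE$, at most $m_\ell - 1$ of the edges of $S$ can lie in the planted subgraph $\calC = H[Z]$. To see this, let $T \subseteq S$ be the set of edges of $S$ present in $\calC$, viewed as an edge-induced subhypergraph with $k \defeq |T|$ edges and $\ell' \defeq |V(T)| \leq \ell$ vertices. If $k \geq m_\ell$, then since $m_{\ell'}$ is non-decreasing in $\ell'$ by \eqref{eqn:m_ell}, we have $m_{\ell'} \leq m_\ell \leq k \leq D$, so $(\ell', k) \in I$ and $T \in S_{\ell', k}$ witnesses $\calE_{\ell', k}$, contradicting $\calE$. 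Hence $k \leq m_\ell - 1$, as claimed.

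Conditional on $V(S) \subseteq Z$, the variables $\{Y_e\}_{e \in S}$ are i.i.d.\ $\BER(p)$, with $|\tilde Y_e| = (1-q)/\sigma$ if $Y_e = 1$ and $q/\sigma$ if $Y_e = 0$. Expanding over the random set $T = \{e \in S : Y_e = 1\}$, applying the restriction $|T| \leq m_\ell - 1$ from the previous step, and bounding $1 - p, 1 - q \leq 1$ yields
\[\EE_\calP[|\phi_S(Y)|\,\bbone{Y\in\calE} \mid Z] \leq \sum_{k=0}^{m_\ell - 1} \binom{m}{k} \left(\frac{p}{\sigma}\right)^{k} \left(\frac{q}{\sigma}\right)^{m - k} = \frac{q^m}{\sigma^m}\sum_{k=0}^{m_\ell - 1} \binom{m}{k}\left(\frac{p}{q}\right)^k.\]
Since $p > q$, the summand is maximized at $k = m_\ell - 1$, so the sum is at most $m_\ell \binom{m}{m_\ell - 1} (p/q)^{m_\ell - 1}$. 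The elementary bound $m_\ell \leq 2^{m_\ell - 1}$ (valid for $m_\ell \geq 1$) then converts $p^{m_\ell - 1}$ into $(2p)^{m_\ell - 1}$. Finally, taking the outer expectation and using $\calP(V(S) \subseteq Z) = \rho^\ell$ gives the claimed inequality.

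The main obstacle, and indeed the only step that goes beyond the computation in Lemma~\ref{lemma:good_expectation_bound}, is the combinatorial observation linking $\calE$ to the number of edges of $S$ present in $\calC$; its proof rests on the monotonicity $m_{\ell'} \leq m_\ell$ for $\ell' \leq \ell$. Everything else is a routine refinement in which the full Binomial sum is truncated and the dominant tail term is extracted.
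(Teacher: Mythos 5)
Your proof is correct and shares the same key insight as the paper's, with a slightly different (but equivalent) bookkeeping of the final computation. Both proofs hinge on the same combinatorial observation — that conditioning on $\calE$ forces $|T| \le m_\ell - 1$ on-edges in $S$ — and both then reduce to a binomial-type bound. The paper just asserts this observation (``$Y \in \calE$ implies that $\calC$ contains at most $m_\ell - 1$ edges in $S$''), whereas you supply the justification via the monotonicity $m_{\ell'} \le m_\ell$ for $\ell' \le \ell$; this is a useful addition, as it is the step that makes the conditioning event $\calE$ tractable. Where you differ is the arithmetic: the paper upper bounds $\bbone{Y\in\calE}$ by a union bound over the $\binom{m}{s}$ choices of $s = m - m_\ell + 1$ edges that must be absent, so that each term in the resulting sum is bounded identically by $(q/\sigma)^s(2p/\sigma)^{m-s}$; you instead partition on $k = |T|$, giving a truncated binomial sum $\sum_{k=0}^{m_\ell - 1}\binom{m}{k}(p/\sigma)^k(q/\sigma)^{m-k}$ and extracting the dominant term $k = m_\ell - 1$. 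Both yield exactly the claimed bound. One small imprecision: the assertion ``since $p > q$, the summand is maximized at $k = m_\ell - 1$'' is not quite a consequence of $p > q$ alone — what is actually needed is that $(m-k)p \ge (k+1)q$ for all $k \le m_\ell - 2$, which holds because $p/q = n^{\beta-\alpha}$ grows polynomially while $m, m_\ell \le D = n^{o(1)}$. This is true in the regime at hand so the step is valid, but the justification deserves to be spelled out; the paper's union-bound route sidesteps this issue entirely by bounding every term uniformly rather than isolating a maximum.
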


\begin{proof}
    As in the proof of Lemma \ref{lemma:good_expectation_bound}, we will first bound $\EE_\calP[|\phi_S(Y)|\bbone{Y \in \calE} \mid Z]$ for $V(S) \subseteq Z$.
    By the definition of $\calE$ in \eqref{eq:def-cale}, $Y \in \calE$ implies that $\calC$ contains at most $m_\ell - 1$ edges in $S \in \mathcal B$ which has $m$ edges.
    Therefore, for $s \defeq m - m_\ell + 1 \ge 1$, there exist $e_1, \ldots, e_s \in S$ such that $Y_{e_i} = 0$ for each $i \in [s]$.
    With this in mind, we have
    \begin{align*}
        \EE_\calP[|\phi_S(Y)|\bbone{Y \in \calE} \mid Z] 
        &\leq \sum_{e_1, \ldots, e_s \in S}\EE_\calP[|\phi_S(Y)|\bbone{Y_{e_1} = \ldots = Y_{e_s} = 0}\mid Z] \\
        &\leq \sum_{e_1, \ldots, e_s \in S}\left(\frac{q}{\sigma}\right)^s\left(\frac{p(1-q) + (1-p)q}{\sigma}\right)^{m - s} 
        \leq \binom{m}{s}\,\frac{q^s\,(2p)^{m-s}}{\sigma^m},
    \end{align*}
    where the last step follows since $q < p$.
    By \eqref{eq:phi-s-jensen} again, we obtain
    \begin{align*}
        |\EE_\calP[\phi_S(Y)\bbone{Y \in \calE}]| 
        &\le \binom{m}{s}\,\frac{q^s\,(2p)^{m-s}}{\sigma^m}\EE_Z\left[\bbone{V(S) \subseteq Z}\right] 
        = \rho^\ell\binom{m}{s}\,\frac{q^s\,(2p)^{m-s}}{\sigma^m},
    \end{align*}
    as desired. 
\end{proof}
    
Recall our original goal of bounding \eqref{eq:conditional-degree-d-norm}.
By \eqref{eq:p-prime-to-p-exp},
we can simplify \eqref{eq:conditional-degree-d-norm} to
\begin{align}
    \|L_{\leq D}'\|^2 
    &= (1+o(1)) \sum_{\substack{S \subseteq [M], \\ |S| \leq D}}\EE_{\calP}[\phi_S(Y)\bbone{Y \in \calE}]^2 \nonumber \\
    &= (1+o(1))\left(\sum_{S \in \mathcal G}\EE_{\calP}[\phi_S(Y)\bbone{Y \in \calE}]^2 + \sum_{S \in \mathcal B}\EE_{\calP}[\phi_S(Y)\bbone{Y \in \calE}]^2\right). \label{eqn:good_bad_partition}
\end{align}
Let us consider the first sum in \eqref{eqn:good_bad_partition}. 

\begin{lemma}\label{lemma:good_bound}
    We have $\sum_{S \in \mathcal G}\EE_{\calP}[\phi_S(Y)\bbone{Y \in \calE}]^2 = 1 + o(1)$.
\end{lemma}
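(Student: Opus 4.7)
The plan is to first isolate the $S = \varnothing$ contribution, which equals $\calP(\calE)^2 = (1-o(1))^2 = 1-o(1)$ by the bound from \S\ref{subsubsection:prob_bound}, and then argue that the remaining sum over nonempty $S \in \mathcal G$ is $o(1)$. For each such $S$, squaring Lemma~\ref{lemma:good_expectation_bound} yields $\EE_\calP[\phi_S(Y)\bbone{Y\in\calE}]^2 \le \rho^{2\ell}(2p/\sigma)^{2m}$, where $\ell = |V(S)|$ and $m = |E(S)|$. Grouping by $(\ell, m)$ and applying the counting bound $|S_{\ell,m}| \le n^\ell (eD)^{rm}$ from \eqref{eqn:Slm_size} together with $\sigma^2 \ge q/e$ from \eqref{eqn:common_inequalities}, I would reduce the estimate to
\[
\sum_{\substack{S \in \mathcal G \\ S \ne \varnothing}} \EE_\calP[\phi_S(Y)\bbone{Y \in \calE}]^2 \;\le\; \sum_{\ell, m} n^{(2\gamma - 1)\ell + (\beta - 2\alpha + o(1))\,m},
\]
with the outer sum over $r \le \ell \le rD$ and $\lceil \ell/r\rceil \le m \le \min(D, m_\ell - 1)$, where the upper constraint $m < m_\ell$ encodes $S \in \mathcal G$.

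The core step is then to show the exponent is uniformly at most a negative constant. Writing $t := m/\ell \in [1/r, \gamma/\alpha + \delta]$, the exponent factors as $\ell\,h(t) + o(1)\,m$ where $h(t) \defeq (2\gamma - 1) + (\beta - 2\alpha)\,t$ is linear, so by linearity it suffices to check the two endpoints. At $t = 1/r$, the inequality $h(1/r) < 0$ is equivalent to $\alpha > \beta/2 + r(\gamma - 1/2)$; this follows from the hypothesis $\alpha > \beta\gamma$ together with $\gamma < 1/2$ and $\beta < r$, since these conditions imply $\beta\gamma > \beta/2 + r(\gamma - 1/2)$ (multiply $\gamma - 1/2 < 0$ by $\beta < r$). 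At $t = \gamma/\alpha + \delta$, one computes $h(\gamma/\alpha + \delta) = -(\alpha - \beta\gamma)/\alpha + (\beta - 2\alpha)\delta$, which is strictly negative provided $\delta > 0$ is chosen small enough in terms of $(\alpha, \beta, \gamma)$, using $\alpha > \beta\gamma$. Linearity of $h$ then yields $h(t) \le -\eta$ on the whole interval for some constant $\eta > 0$; since $m \le m_\ell - 1 = O(\ell)$ on the good range, the $o(1)\,m$ error term is dominated by $-\eta \ell$, giving total exponent at most $-\eta r/2$ for $n$ large. There are only $\operatorname{poly}(D) = n^{o(1)}$ valid pairs $(\ell, m)$, so the whole sum is $n^{o(1)} \cdot n^{-\eta r/2} = o(1)$.

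The main obstacle is the joint constraint on $\delta$. On the one hand, $\delta$ must be small enough to keep $h(\gamma/\alpha + \delta) < 0$, which is precisely how the hypothesis $\alpha > \beta\gamma$ enters the argument; on the other hand, the same $\delta$ was already fixed in the definition $m_\ell = \lceil \ell(\gamma/\alpha + \delta)\rceil$ used to build the high-probability event $\calE$, and must leave room for the union bound in \S\ref{subsubsection:prob_bound} to give $\calP(\calE) = 1 - o(1)$. Fortunately, any sufficiently small fixed $\delta > 0$ meets both requirements simultaneously, and the two endpoint inequalities (which separately drive the $\gamma \ge 1/2$ and $\gamma < 1/2$ thresholds) cleanly close the calculation.
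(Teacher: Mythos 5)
Your proof is correct and tracks the paper's argument closely: both isolate the $S = \varnothing$ contribution, apply Lemma~\ref{lemma:good_expectation_bound} together with the counting bound \eqref{eqn:Slm_size} and $\sigma^2 \geq q/e$, and reduce to showing the exponent $(2\gamma-1)\ell + (\beta - 2\alpha + o(1))m$ is at most $-\Omega(\ell)$ on the good range $\lceil \ell/r\rceil \le m \le \min(D, m_\ell - 1)$. The one place you diverge is the final step: the paper splits into the cases $\beta < 2\alpha$ and $\beta \ge 2\alpha$ and bounds the inner geometric-type sum by its dominant term in each case, whereas you observe the exponent factors as $\ell\, h(m/\ell) + o(1)m$ with $h(t) = (2\gamma - 1) + (\beta - 2\alpha)t$ affine, and check the two endpoints $t = 1/r$ and $t = \gamma/\alpha + \delta$. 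These are two phrasings of the same dichotomy (the sign of the slope $\beta - 2\alpha$ decides which endpoint dominates), so I'd call this a clean streamlining rather than a new route; your verification at $t = 1/r$, namely that $\alpha > \beta\gamma$, $\gamma < 1/2$, $\beta < r$ jointly give $\alpha > \beta/2 + r(\gamma - 1/2)$ via $(\beta - r)(\gamma - 1/2) > 0$, is correct, and it nicely exposes how the $\gamma < 1/2$ threshold $\alpha = \beta\gamma$ implies the $\gamma \ge 1/2$ threshold at the boundary $t=1/r$. One small point in your favor over the paper's writeup: you evaluate the $S = \varnothing$ term as $\calP(\calE)^2 = 1 - o(1)$, which supplies the matching lower bound that the paper leaves implicit.
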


\begin{proof}
    If $S$ is the empty graph, then $\E_\calP[\phi_S(Y)] = 1$.
    For nonempty $S \in S_{\ell,m} \subseteq \mathcal G$, we have
    \[\frac{\ell}{r} \leq m \leq m_{\ell} - 1,\]
    since $S$ has no isolated vertices.
    From \eqref{eqn:Slm_size} and Lemma \ref{lemma:good_expectation_bound}, we obtain
    \begin{align*}
        \sum_{S \in \mathcal G}\EE_{\calP}[\phi_S(Y)\bbone{Y \in \calE}]^2 &\leq 1 + \sum_{\ell = r}^{rD}\sum_{m = \ell/r}^{\min(m_\ell - 1, D)}\sum_{S\in S_{\ell, m}}\rho^{2\ell}\left(\frac{2p}{\sigma}\right)^{2m} \\
        &\leq 1 + \sum_{\ell = r}^{rD}\sum_{m = \ell/r}^{\min(m_\ell - 1, D)}n^\ell(eD)^{rm}\rho^{2\ell}\left(\frac{2p}{\sigma}\right)^{2m} \\
        &\leq 1 + \sum_{\ell = r}^{rD}\sum_{m = \ell/r}^{\min(m_\ell - 1, D)}(n\rho^2)^\ell\left(\frac{e^{r+3}D^r\,p^2}{q}\right)^m \\
        &= 1 + \sum_{\ell = r}^{rD}n^{(2\gamma - 1)\ell}\sum_{m = \ell/r}^{\min(m_\ell - 1, D)}n^{(\beta - 2\alpha + o(1))m},
    \end{align*}
    where we use \eqref{eqn:common_inequalities}.
    We now consider two cases. First, suppose $\beta < 2\alpha$.
    Then, we have for some small enough constant $\delta_1 = \delta_1(\alpha, \beta) > 0$ and $n$ large enough:
    \begin{align*}
        \sum_{\ell = r}^{rD}n^{(2\gamma - 1)\ell}\sum_{m = \ell/r}^{\min(m_\ell - 1 , D)}n^{(\beta - 2\alpha + o(1))m} \leq \sum_{\ell = r}^{rD}n^{(2\gamma - 1)\ell}\sum_{m = \ell/r}^{\min(m_\ell - 1 , D)}n^{-\delta_1 m} 
        \leq 2\sum_{\ell = r}^{rD}n^{(2\gamma - 1)\ell - \delta_1 \ell/r}.
    \end{align*}
    Since $\gamma < 1/2$, we can simplify the above further to get
    \[2\sum_{\ell = r}^{rD}n^{(2\gamma - 1)\ell - \delta_1 \ell/r} < 2\sum_{\ell = r}^{rD}n^{- \delta_1\ell/r} \leq 4n^{-\delta_1},\]
    as desired.    
    Now, let us assume $\beta \geq 2\alpha$.
    We have
    \begin{align}
        \sum_{\ell = r}^{rD}n^{(2\gamma - 1)\ell}\sum_{m = \ell/r}^{\min(m_\ell - 1 , D)}n^{(\beta - 2\alpha + o(1))m} 
        &\leq \sum_{\ell = r}^{rD} n^{(2\gamma - 1)\ell}\left(D - \frac{\ell}{r} + 1\right)n^{(\beta - 2\alpha + o(1))(m_\ell-1)} \notag \\
        &\le \sum_{\ell = r}^{rD} n^{(2\gamma - 1)\ell + (\beta - 2\alpha + o(1))(m_\ell-1) + o(1)}. \label{eq:exponent-to-be-bounded}
    \end{align}
    Let us now consider the coefficient of $m_\ell-1$ in the exponent of $n$. 
    By the definition of $m_\ell$ in \eqref{eqn:m_ell}, we have
    \[m_\ell - 1 < \ell\left(\frac{\gamma}{\alpha} + \delta \right).\]
    With this in hand, we have 
    \begin{align*}
        (\beta - 2\alpha + o(1))(m_\ell-1)
        < (\beta - 2\alpha + o(1))\left(\frac{\gamma}{\alpha} + \delta\right)\ell  
        < \left(\frac{\beta\gamma}{\alpha} - 2\gamma + r\delta\right)\ell .
    \end{align*}
    From here it follows that the exponent in \eqref{eq:exponent-to-be-bounded} can be bounded as
    \begin{equation} 
    (2\gamma - 1)\ell + (\beta - 2\alpha + o(1))(m_\ell-1) + o(1) \le \left(\frac{\beta\gamma}{\alpha} - 1 + 2r\delta\right)\ell 
    \le - \delta \ell , 
    \label{eq:exponent-bound-1}
    \end{equation}
   where the last step holds in view of the condition $\alpha > \beta \gamma$ in \eqref{eqn:small_gamma_lb}, once we choose $\delta = \delta(\alpha, \beta, \gamma) > 0$ to be sufficiently small.
    Then \eqref{eq:exponent-to-be-bounded} can be further bounded as
    \[\sum_{\ell = r}^{rD} n^{(2\gamma - 1)\ell + (\beta - 2\alpha + o(1))(m_\ell-1) + o(1)} \leq \sum_{\ell = r}^{rD} n^{-\delta\ell} \leq 2n^{-\delta\,r},\]
    as desired.
\end{proof}

Now, let us consider the bad subgraphs and bound the second sum in \eqref{eqn:good_bad_partition}.

\begin{lemma}\label{lemma:bad_bound}
We have $\sum_{S \in \mathcal B}\EE_{\calP}[\phi_S(Y)\bbone{Y \in \calE}]^2 = o(1)$.
\end{lemma}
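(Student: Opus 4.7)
The strategy is to combine the per-term bound of Lemma~\ref{lemma:bad_expectation_bound} with the enumeration bound $|S_{\ell,m}| \le n^\ell (eD)^{rm}$ from \eqref{eqn:Slm_size}, extract the exponent of $n$, and verify that it is linear and negative in $\ell$. This mirrors the calculation in Lemma~\ref{lemma:good_bound} while keeping careful track of the extra $q^{2(m-m_\ell+1)}$ factor that is available only in the bad regime, which is exactly what allows us to close the sum despite $m$ being large.

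First, I would square the bound of Lemma~\ref{lemma:bad_expectation_bound}, multiply by $|S_{\ell,m}|$, and use $\sigma^{-2m} \le (e/q)^m$ from \eqref{eqn:common_inequalities} to obtain, for each $(\ell, m) \in I$,
$$\sum_{S \in S_{\ell,m}} \EE_\calP[\phi_S(Y)\bbone{Y\in\calE}]^2 \le n^{E(\ell,m) + o(m)},$$
where $E(\ell,m) \defeq (2\gamma-1)\ell - \beta(m - 2m_\ell + 2) - 2\alpha(m_\ell-1)$. The $o(m)$ term absorbs $(eD)^{rm}$, $\binom{m}{m_\ell-1}^2 \le 4^m$, and the constant-base powers $2^{2(m_\ell-1)}$ and $e^m$; this is permissible because $D = n^{o(1)}$ makes each of these sub-polynomial in $n$ raised to $O(m)$, i.e., $n^{o(m)}$.

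Next, for fixed $\ell$, the exponent $E(\ell, m)$ is linear in $m$ with leading coefficient $-\beta < 0$, so $E(\ell,m)$ is maximized over $m \in \{m_\ell, \ldots, D\}$ at the boundary $m = m_\ell$. Substituting $m = m_\ell$ and using the bound $m_\ell = \ell(\gamma/\alpha + \delta) + O(1)$ from \eqref{eqn:m_ell} yields
$$E(\ell, m_\ell) = \ell\left[\frac{\beta\gamma}{\alpha} - 1 + (\beta-2\alpha)\delta\right] + O(1).$$
By the hypothesis $\alpha > \beta\gamma$ in \eqref{eqn:small_gamma_lb}, $\beta\gamma/\alpha - 1$ is a strictly negative constant. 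Choosing $\delta = \delta(\alpha,\beta,\gamma) > 0$ sufficiently small (irrespective of the sign of $\beta - 2\alpha$) makes the bracket at most $-\delta'$ for some constant $\delta' = \delta'(\alpha,\beta,\gamma) > 0$; absorbing the $O(1)$ rounding residual and the $o(m)$ slack gives $E(\ell, m) + o(m) \le -\delta'\ell/2$ for all $(\ell,m) \in I$ and all $n$ large enough.

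Finally, summing over $\ell \in \{r, \ldots, rD\}$ and $m \in \{m_\ell, \ldots, D\}$, the inner sum contributes a factor of at most $D = n^{o(1)}$ and the outer sum is geometric, giving $O(n^{o(1)} \cdot n^{-\delta' r/2}) = o(1)$ as desired. The main obstacle is the bookkeeping needed to ensure that $\delta$ can be chosen small enough to simultaneously dominate both the $(\beta-2\alpha)\delta$ term (whose sign depends on parameters and is not a priori controlled) and the $o(m)$ contribution from $D^{O(m)}$ and combinatorial prefactors; once this single calibration is performed, the remainder is routine algebra parallel to the second half of the proof of Lemma~\ref{lemma:good_bound}.
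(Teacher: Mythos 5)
Your proposal is correct and follows essentially the same route as the paper: bound each summand via Lemma~\ref{lemma:bad_expectation_bound} and \eqref{eqn:Slm_size}, collect the exponent of $n$, note it is decreasing in $m$ so that the $m = m_\ell$ term dominates, and invoke $\alpha > \beta\gamma$ to make the coefficient of $\ell$ negative. One stylistic improvement over the paper is that you avoid its case split on the sign of $\beta - 2\alpha$ by observing directly that $\frac{\beta\gamma}{\alpha} - 1 + (\beta - 2\alpha)\delta < 0$ for $\delta$ small regardless of the sign of $\beta - 2\alpha$.

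The one place where you are glossing over a detail is the claim that ``absorbing the $O(1)$ rounding residual and the $o(m)$ slack'' yields $E(\ell, m) + o(m) \le -\delta'\ell/2$ uniformly over $(\ell, m) \in I$, including at the smallest value $\ell = r$. This does hold, but only because the $O(1)$ constant is in fact nonpositive: writing $m_\ell - 1 = \ell(\gamma/\alpha + \delta) + (\theta_\ell - 1)$ with $\theta_\ell \in [0,1)$, the additive constant in $E(\ell, m_\ell)$ equals $(\beta - 2\alpha)(\theta_\ell - 1) - \beta$, which is at most $-\beta$ when $\beta \ge 2\alpha$ and at most $2\alpha - 2\beta$ when $\beta < 2\alpha$ — negative in either case. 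Similarly, the $o(m)$ slack is of the form $m \epsilon_n$ with $\epsilon_n = o(1)$, and one should note that since the coefficient of $m$ in $E(\ell,m)$ is $-\beta$, for large $n$ the maximizer of $E(\ell,m) + m\epsilon_n$ over $m \ge m_\ell$ remains at $m = m_\ell$, where $m_\ell\epsilon_n = O(\ell)\epsilon_n \le \delta'\ell/2$ eventually. With these two checks made explicit, your argument is complete and parallels the paper's.
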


\begin{proof}
    Note that for $S \in S_{\ell,m} \subseteq \mathcal B$, we have
    \[m_{\ell} \leq m \leq D.\]
    From \eqref{eqn:Slm_size} and Lemma \ref{lemma:bad_expectation_bound}, we have
    \begin{align*}
        \sum_{S \in \mathcal B}\EE_{\calP}[\phi_S(Y)\bbone{Y \in \calE}]^2 &\leq \sum_{\ell = r}^{rD}\sum_{m = m_\ell}^{D}\sum_{S\in S_{\ell, m}}\rho^{2\ell}\binom{m}{m_\ell-1}^2\,\frac{q^{2(m - m_\ell + 1)}\,(2p)^{2(m_\ell -1)}}{\sigma^{2m}} \\    
        &\leq  \sum_{\ell = r}^{rD}\sum_{m = m_\ell}^{D}n^\ell(eD)^{rm}\rho^{2\ell}\left(\frac{q}{\sigma}\right)^{2(m - m_\ell + 1)}\left(\frac{2\,m\,p}{\sigma}\right)^{2(m_\ell - 1)} \\
        &\leq \sum_{\ell = r}^{rD}\sum_{m = m_\ell}^{D}(n\rho^2)^\ell\left(\frac{e^r\,D^r\,q^2}{\sigma^2}\right)^m\left(\frac{2\,m\,p}{q}\right)^{2(m_\ell - 1)} \\
        &= \sum_{\ell = r}^{rD}n^{(2\gamma - 1)\ell}\,n^{2(\beta - \alpha + o(1))(m_{\ell} - 1)}\sum_{m =m_\ell}^{D}n^{(-\beta + o(1))m},
    \end{align*}
    where we use \eqref{eqn:common_inequalities}. 
    Note that the coefficient of $m$ in the exponent is negative for large enough $n$. 
    Thus, we can further simplify the above to
    \begin{align*}
        \sum_{S \in \mathcal B}\EE_{\calP}[\phi_S(Y)\bbone{Y \in \calE}]^2 &\leq 2\sum_{\ell = r}^{rD}n^{(2\gamma - 1)\ell}\,n^{2(\beta - \alpha + o(1))(m_{\ell} - 1)}\,n^{(-\beta + o(1))m_\ell} \\
        &= 2\sum_{\ell = r}^{rD}n^{(2\gamma - 1)\ell + (\beta - 2\alpha + o(1))(m_{\ell} - 1) - \beta + o(1)}.
    \end{align*}
    Once again, we consider two cases.
    First, let us assume $\beta < 2\alpha$.
    As $\gamma < 1/2$, we have for $n$ large enough:
    \[\sum_{S \in \mathcal B}\EE_{\calP}[\phi_S(Y)\bbone{Y \in \calE}]^2 \leq 2\sum_{\ell = r}^{rD}n^{ - \beta + o(1)} \leq 2rDn^{- \beta + o(1)} \leq n^{-\beta/2},\]
    as desired.
    Now, let us assume $\beta \geq 2\alpha$.
    An identical argument as in the proof of \eqref{eq:exponent-bound-1} shows that
    \[(2\gamma - 1)\ell + (\beta - 2\alpha + o(1))(m_{\ell} - 1) - \beta + o(1) \le -\delta\ell - \beta .\]
    In particular, we have
    \[\sum_{S \in \mathcal B}\EE_{\calP}[\phi_S(Y)\bbone{Y \in \calE}]^2 \leq 2\sum_{\ell = r}^{rD}n^{-\delta\ell- \beta} \leq 4n^{-\beta - \delta r},\]
    as desired.
\end{proof}

\begin{proof}[Proof of Proposition~\ref{prop:lower-small-gamma}]
It suffices to combine \eqref{eqn:good_bad_partition} with Lemmas \ref{lemma:good_bound} and \ref{lemma:bad_bound}.
\end{proof}

\section{Detection upper bound}\label{section:upper_bound}

As in the previous section, we split the analysis into two cases based on the value $\gamma$.
For $\gamma \ge 1/2$, we consider the signed count of the total number of hyperedges in $H$, as defined in \eqref{eq:def-stat-1}.
In particular, it is a linear function in the entries of the adjacency tensor $Y$.
For $\gamma < 1/2$, our statistic is the (unsigned) count of the occurrences of a certain \emph{balanced} subhypergraph in $H$, as defined in \eqref{eq:def-t-statistic-2}.
The size of the subhypergraph depends only on the fixed parameters $(\alpha, \beta, \gamma)$ as given in Proposition~\ref{prop:existence-balanced-hypergraph}.
Therefore, the statistic in \eqref{eq:def-t-statistic-2} is a constant-degree polynomial in the entries of $Y$ and can be computed in polynomial time.

\begin{proof}[Proof of Theorem~\ref{thm:main-results} (upper bound)]
If $\gamma \ge 1/2$ and $\alpha < \beta/2 + r(\gamma - 1/2)$, Proposition~\ref{prop:upper-large-gamma} shows that the statistic $\tilde{T}$ defined in \eqref{eq:def-stat-1} strongly separates $\calP$ and $\calQ$.
If $\gamma < 1/2$ and $\alpha < \beta \gamma$, Proposition~\ref{prop:upper-small-gamma} shows that the statistic $T$ defined in \eqref{eq:def-t-statistic-2} strongly separates $\calP$ and $\calQ$.
\end{proof}

\subsection{Upper bound for large $\gamma$}\label{subsection:large_gamma_ub}

In the regime where $\gamma \ge 1/2$, we will use the signed count of hyperedges in $H$ as our test statistic.
Recall $M = \binom{n}{r}$ as defined in \S \ref{section:main_results}, and $[M]$ denotes the set of all possible hyperedges.
We define
\begin{equation}
\tilde{T} \defeq \sum_{e\in [M]}\tilde Y_e,
\label{eq:def-stat-1}
\end{equation}
where $\tilde Y_e$ is defined in \eqref{eq:def-tilde-y-e}. Trivially, $\tilde{T}$ is a degree-$1$ polynomial of the entries of $Y$.

\begin{proposition}
\label{prop:upper-large-gamma}
Suppose $p = n^{-\alpha}$, $q = n^{-\beta}$, and $\rho = n^{\gamma - 1}$ with fixed $r \ge 2$, $0 < \alpha < \beta < r-1$, and $\gamma \in (0,1)$ such that
\begin{align}\label{eqn:large_gamma_ub}
    \gamma \geq \frac{1}{2}, \quad \alpha < \frac{\beta}{2} + r\left(\gamma - \frac{1}{2}\right).
\end{align}
For the statistic $\tilde{T}$ defined in \eqref{eq:def-stat-1}, we have
\( \sqrt{\Var_{\calP}(\tilde{T}) \vee \Var_{\calQ}(\tilde{T})} = o\left(\left|\E_{\calP}[\tilde{T}] - \E_{\calQ}[\tilde{T}]\right|\right) .\)
\end{proposition}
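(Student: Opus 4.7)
The plan is to compute the first two moments of $\tilde T$ under both distributions and check that under condition~\eqref{eqn:large_gamma_ub} the mean gap dominates the two standard deviations. Under $\calQ$ the variables $\tilde Y_e$ are i.i.d.\ with mean $0$ and variance $1$, so $\E_\calQ[\tilde T]=0$ and $\Var_\calQ(\tilde T)=M\asymp n^r$. Under $\calP$, Lemma~\ref{lemma:P_expectation_low_deg} applied to $S=\{e\}$ (with $|V(S)|=r$, $|S|=1$) gives $\E_\calP[\tilde Y_e]=\rho^r(p-q)/\sigma$, and summing over the $M$ edges yields $\E_\calP[\tilde T]=M\rho^r(p-q)/\sigma$. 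With the scalings $M\asymp n^r$, $\rho^r=n^{r(\gamma-1)}$, $p-q\asymp n^{-\alpha}$ (using \eqref{eqn:common_inequalities}), and $\sigma^2\asymp q=n^{-\beta}$, this gives
\begin{equation*}
\bigl(\E_\calP[\tilde T]-\E_\calQ[\tilde T]\bigr)^2 \;\asymp\; n^{2r\gamma-2\alpha+\beta}.
\end{equation*}
Comparing with $\Var_\calQ(\tilde T)\asymp n^r$, we see the ratio is $n^{r-2r\gamma+2\alpha-\beta}$, which is $o(1)$ precisely under \eqref{eqn:large_gamma_ub}.

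The main step is bounding $\Var_\calP(\tilde T)=\sigma^{-2}\sum_{e,f}\Cov_\calP(Y_e,Y_f)$, which I would handle by classifying edge pairs by their overlap $k=|e\cap f|\in\{0,1,\dots,r\}$. Since $Y_e$ and $Y_f$ are conditionally independent given $Z$, the law of total covariance gives, for $e\neq f$,
\begin{equation*}
\Cov_\calP(Y_e,Y_f)=(p-q)^2\bigl(\rho^{|e\cup f|}-\rho^{2r}\bigr)=(p-q)^2\bigl(\rho^{2r-k}-\rho^{2r}\bigr),
\end{equation*}
which vanishes for $k=0$ and is $\asymp p^2\rho^{2r-k}$ for $1\le k\le r-1$. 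For $e=f$ the variance is $\pi_e(1-\pi_e)$ with $\pi_e=q+(p-q)\rho^r\le q+p\rho^r$. The number of ordered pairs with overlap $k$ is at most $M\binom{r}{k}\binom{n-r}{r-k}\asymp n^{2r-k}$.

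Putting these together, the off-diagonal contributions ($1\le k\le r-1$) sum to at most a constant multiple of $\sum_{k=1}^{r-1}n^{(2r-k)\gamma-2\alpha+\beta}$, and each term is smaller than the squared mean gap by a factor of $n^{-k\gamma}\to 0$. The diagonal contribution is at most $\sigma^{-2}M(q+p\rho^r)\asymp n^r+n^{r+\beta-\alpha+r(\gamma-1)}$; divided by the squared mean gap, these yield ratios $n^{r-2r\gamma+2\alpha-\beta}$ (which is $o(1)$ by \eqref{eqn:large_gamma_ub}) and $n^{\alpha-r\gamma}$ respectively. The latter is also $o(1)$ because \eqref{eqn:large_gamma_ub} combined with $\beta<r-1<r$ gives $\alpha<\beta/2+r(\gamma-1/2)<r\gamma$. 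Hence $\Var_\calP(\tilde T)\vee\Var_\calQ(\tilde T)=o\bigl((\E_\calP[\tilde T]-\E_\calQ[\tilde T])^2\bigr)$, establishing strong separation.

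The argument is essentially a bookkeeping exercise; the only subtlety is the diagonal term $M\pi_e/\sigma^2$, where one must verify that the ``extra'' piece $M p\rho^r/q$ is controlled by the hypothesis, via the implicit inequality $\alpha<r\gamma$ that follows from \eqref{eqn:large_gamma_ub} together with the standing assumption $\beta<r-1$. All other terms decay by a factor polynomial in $n$ with a negative exponent depending only on $(\alpha,\beta,\gamma,r)$.
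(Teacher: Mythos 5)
Your proposal is correct and follows essentially the same route as the paper: compute the moments of $\tilde T$ using conditional independence of the $Y_e$ given $Z$, classify edge pairs by overlap size to bound $\Var_\calP(\tilde T)$ by the same three types of terms, and verify the resulting exponent inequalities, including the implicit $\alpha < r\gamma$ which the paper also uses. The cosmetic differences (working with unnormalized $Y_e$ and pulling out $\sigma^{-2}$, using $\Var_\calP(Y_e)$ rather than $\EE_\calP[\tilde Y_e^2]$ for the diagonal) do not change the substance.
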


Let us first compute the expectation and variance under each distribution.

\begin{lemma}\label{lemma:num_edges_expectations}
    We have
    \begin{align*}
        \EE_{\calQ}[\tilde{T}] &= 0, \quad \Var_\calQ(\tilde{T}) = M, \quad \EE_\calP[\tilde{T}] = M\,\rho^r\left(\frac{p-q}{\sigma}\right), \\
        \Var_\calP(\tilde{T}) &\leq M + \frac{2M\rho^r\,p}{\sigma^2} + \frac{2\,M\,rn^{r-1}\rho^{2r-1}\,p^2}{\sigma^2}.
    \end{align*}
\end{lemma}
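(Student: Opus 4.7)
My plan is to handle the four quantities by conditioning on $Z$ and using the tower property. Under $\calQ$ the $Y_e$ are i.i.d.\ $\BER(q)$, so the $\tilde Y_e$ are centered and uncorrelated with unit variance, immediately giving $\EE_\calQ[\tilde T] = 0$ and $\Var_\calQ(\tilde T) = M$. For the $\calP$-side mean, the $|S|=1$ case of Lemma~\ref{lemma:P_expectation_low_deg} yields $\EE_\calP[\tilde Y_e] = \rho^r (p-q)/\sigma$, and linearity produces $\EE_\calP[\tilde T] = M \rho^r (p-q)/\sigma$.

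The variance under $\calP$ is the substantive computation. I expand
\[
\Var_\calP(\tilde T) = \sum_e \Var_\calP(\tilde Y_e) + \sum_{e \ne f} \Cov_\calP(\tilde Y_e, \tilde Y_f),
\]
and treat the two pieces separately. For each diagonal term I apply the law of total variance: the conditional variance of $\tilde Y_e$ given $Z$ is $p(1-p)/\sigma^2$ on $\{e \subseteq Z\}$ and $1$ otherwise, so its expectation is at most $1 + \rho^r p/\sigma^2$; the variance of the conditional mean equals $\rho^r(1-\rho^r)(p-q)^2/\sigma^2 \le \rho^r p^2/\sigma^2$. Summing the resulting bound over the $M$ edges yields the first two summands of the claimed inequality.

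For the off-diagonal terms I use the fact that, conditional on $Z$, the entries $Y_e$ and $Y_f$ are independent for $e \ne f$, hence $\Cov_\calP(\tilde Y_e, \tilde Y_f) = \Cov_Z(\EE_\calP[\tilde Y_e \mid Z], \EE_\calP[\tilde Y_f \mid Z]) = (\rho^{|e \cup f|} - \rho^{2r})(p-q)^2/\sigma^2$. This vanishes when $e$ and $f$ are vertex-disjoint and is bounded by $\rho^{2r-k} p^2/\sigma^2$ when $k \defeq |e \cap f| \in \{1, \dots, r-1\}$. Counting ordered pairs with intersection size $k$ gives $M\binom{r}{k}\binom{n-r}{r-k} = O(M n^{r-k})$ pairs, so the contribution from each $k$ is at most a constant times $M n^{r-k}\rho^{2r-k} p^2/\sigma^2$. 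The main (mild) obstacle is simply the bookkeeping at this final step: since consecutive values of $k$ scale the summand by a factor $(n\rho)^{-1} = n^{-\gamma} = o(1)$, a geometric-series argument absorbs the $k \ge 2$ terms into twice the $k=1$ term, producing the third summand $2M r n^{r-1} \rho^{2r-1} p^2/\sigma^2$ and completing the proof.
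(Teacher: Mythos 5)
Your proof is correct and follows essentially the same route as the paper's. The only cosmetic difference is in how you organize the variance bound under $\calP$: you split into $\sum_e \Var_\calP(\tilde Y_e) + \sum_{e\ne f}\Cov_\calP(\tilde Y_e,\tilde Y_f)$ and invoke the laws of total variance/covariance, whereas the paper bounds $\Cov_\calP(\tilde Y_e,\tilde Y_f)\le\EE_\calP[\tilde Y_e\tilde Y_f]$ (valid since the means are positive) and works with second moments directly; both yield the same three terms, and the counting of edge pairs by intersection size together with the domination argument using $n\rho = n^\gamma \to \infty$ is identical in the two proofs.
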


\begin{proof}
    Note that $\tilde Y_e$ is defined by standardizing $Y_e$ under the distribution $\calQ$, and $Y_e$ and $Y_f$ are independent under $\calQ$ for $e \neq f$. 
    The results on $\EE_\calQ[\tilde{T}]$ and $\Var_\calQ(t)$ follow immediately. 
    
    Let us now consider an edge $e \in [M]$.
    We have
    \[\EE_\calP[\tilde Y_e] = \rho^r\left(\frac{p-q}{\sigma}\right) + (1 - \rho^r)\cdot 0 = \rho^r\left(\frac{p-q}{\sigma}\right),\]
    which leads to the desired formula for $\EE_\calP[\tilde{T}]$.
    
    Now, we have the following for the variance under $\calP$:
    \[\Var_\calP(\tilde{T}) = \sum_{e, f \in [M]}\Cov_\calP(\tilde Y_e, \tilde Y_f).\]
    Notice that $\tilde Y_e$ and $\tilde Y_f$ are independent for $e\cap f = \varnothing$.
    It follows that
    \begin{align}
        \Var_\calP(\tilde{T}) &= \sum_{\substack{e, f \in [M], \\ e\cap f \neq \varnothing}}\Cov_\calP(\tilde Y_e, \tilde Y_f) \notag \\        
        &= \sum_{\substack{e, f \in [M], \\ e\cap f \neq \varnothing}}\left(\EE_\calP[\tilde Y_e \tilde Y_f] - \rho^{2r}\left(\frac{p-q}{\sigma}\right)^2\right) \notag \\ 
        &\leq \sum_{e \in [M]}\EE_\calP[\tilde Y_e^2] + \sum_{e \in [M]}\sum_{\substack{f \in [M] \setminus e, \\ e\cap f \neq \varnothing}}\EE_\calP[\tilde Y_e \tilde Y_f] . \label{eq:var-t-p-two-terms}
    \end{align}
    Let us consider the first term.
    We have
    \begin{align*}
        \EE_\calP[\tilde Y_e^2] 
        &= \rho^r\left(p\left(\frac{1 - q}{\sigma}\right)^2 + (1-p)\left(\frac{q}{\sigma}\right)^2\right) + (1-\rho^r)\left(q\left(\frac{1 - q}{\sigma}\right)^2 + (1-q)\left(\frac{q}{\sigma}\right)^2\right) \\
        &\leq \rho^r\left(\frac{p(1-q)^2 + (1-p)q^2}{\sigma^2}\right) + 1 \\
        &\leq \frac{2\rho^r\,p}{\sigma^2} + 1,
    \end{align*}
    as $q^2 < q < p$.
    For the second term in \eqref{eq:var-t-p-two-terms}, let us first compute $\EE_{\calP}[\tilde Y_e\tilde Y_f]$.
    As $\tilde Y_e$ and $\tilde Y_f$ are independent given $Z$, we have
    \[\EE_{\calP}[\tilde Y_e\tilde Y_f] = \EE_Z[\EE_{\calP}[\tilde Y_e\tilde Y_f\mid Z]] = \EE_Z[\EE_{\calP}[\tilde Y_e\mid Z]\EE_{\calP}[\tilde Y_f\mid Z]].\]
    We note the following:
    \[\EE_{\calP}[\tilde Y_e\mid Z] = \left(\frac{p-q}{\sigma}\right)\bbone{e \subseteq Z} + 0\cdot\bbone{e \not\subseteq Z} = \left(\frac{p-q}{\sigma}\right)\bbone{e \subseteq Z}.\]
    In particular, we have
    \[\EE_{\calP}[\tilde Y_e\tilde Y_f] = \EE_Z\left[\left(\frac{p-q}{\sigma}\right)^2\bbone{e \cup f \subseteq Z}\right] = \rho^{2r - |e\cap f|}\left(\frac{p-q}{\sigma}\right)^2 \leq \frac{\rho^{2r - |e\cap f|}\,p^2}{\sigma^2},\]
    where the last step follows since $p > q$.
    Putting these bounds together with \eqref{eq:var-t-p-two-terms}, we now have
    \[\Var_\calP(\tilde{T}) \leq M + \frac{2M\rho^r\,p}{\sigma^2} + \frac{\rho^{2r}\,p^2}{\sigma^2}\sum_{e \in [M]}\sum_{\substack{f \in [M] \setminus e, \\ e\cap f \neq \varnothing}}\rho^{-|e\cap f|}.\]
    Let us fix $e$.
    We will consider bounding 
    \[\sum_{\substack{f \in [M] \setminus e, \\ e\cap f \neq \varnothing}}\rho^{-|e\cap f|} \leq \sum_{s = 1}^{r - 1}\sum_{v_1, \ldots, v_s \in e}\sum_{\substack{f \in [M], \\ v_1, \ldots, v_s \in f}}\rho^{-s}.\]
    For a fixed $v_1, \ldots, v_s$, there are at most $\binom{n}{r-s} \leq n^{r-s}$ choices for $f$.
    Similarly, there are at most $\binom{r}{s} \leq r^s$ choices for $v_1, \ldots, v_s$.
    It follows that
    \[\sum_{\substack{f \in [M] \setminus e, \\ e\cap f \neq \varnothing}}\rho^{-|e\cap f|} \leq \sum_{s = 1}^{r - 1}r^s\,n^{r-s}\,\rho^{-s} = n^r\sum_{s = 1}^{r - 1}\left(\frac{r}{n\rho}\right)^s.\]
    Note that $n\rho = n^\gamma \gg r$ and so we have
    \[\sum_{\substack{f \in [M] \setminus e, \\ e\cap f \neq \varnothing}}\rho^{-|e\cap f|} \leq 2rn^{r-1}\rho^{-1}.\]
    Plugging this value in for our variance computation, we get
    \[\Var_\calP(\tilde{T}) \leq M + \frac{2M\rho^r\,p}{\sigma^2} + \frac{2\,M\,rn^{r-1}\rho^{2r-1}\,p^2}{\sigma^2},\]
    as desired.   
\end{proof} 
    
\begin{proof}[Proof of Proposition~\ref{prop:upper-large-gamma}]

Lemma~\ref{lemma:num_edges_expectations} implies that it is enough to show
\begin{align*}
    M^2\rho^{2r}\left(\frac{p-q}{\sigma}\right)^2 &\gg M + \frac{2M\rho^r\,p}{\sigma^2} + \frac{2\,M\,rn^{r-1}\rho^{2r-1}\,p^2}{\sigma^2} \\
    \iff \rho^{2r}\left(\frac{p-q}{\sigma}\right)^2 &\gg \frac{1}{M} + \frac{2\rho^r\,p}{M\sigma^2} + \frac{2\,rn^{r-1}\rho^{2r-1}\,p^2}{M\sigma^2}.
\end{align*}
Let us first lower bound the left-hand side. 
From \eqref{eqn:common_inequalities}, we have
\[LHS \geq \frac{\rho^{2r}\,p^2}{e^2q} = e^{-2} \, n^{2r(\gamma - 1) + \beta - 2\alpha} .\]
Next, let us upper bound the right-hand side.
Note that 
\(M = \binom{n}{r} \geq \left(\frac{n}{r}\right)^r\)
so that $\frac{1}{M} = n^{- r + o(1)}$.
From \eqref{eqn:common_inequalities}, we have
\[RHS \leq n^{- r + o(1)} + n^{r(\gamma - 2) + \beta - \alpha + o(1)} + n^{-1 + (2r-1)(\gamma - 1) + \beta - 2\alpha + o(1)} .\]
It remains to verify that
\begin{align*}
2r(\gamma - 1) + \beta - 2\alpha &> - r + o(1) , \\
2r(\gamma - 1) + \beta - 2\alpha &> r(\gamma - 2) + \beta - \alpha + o(1) , \\
2r(\gamma - 1) + \beta - 2\alpha &> -1 + (2r-1)(\gamma - 1) + \beta - 2\alpha + o(1) ,
\end{align*}
using the assumption \eqref{eqn:large_gamma_ub}, which is straightforward.
\end{proof}

\subsection{Upper bound for small $\gamma$}\label{subsection:small_gamma_ub}

In this subsection, we will assume 
\begin{align}\label{eqn:small_gamma_ub}
    \gamma < \frac 12, \quad \alpha < \beta\,\gamma.
\end{align}
As in the previous subsection, we will define a test statistic $T$ to distinguish between $\calP$ and $\calQ$.

\subsubsection{Balanced hypergraphs}

The statistic $T$ we choose will count the occurrences of a specific hypergraph $\tilde H$ in $H$.
We first define balanced hypergraphs.

\begin{definition}\label{def:balanced_hypergraphs}
    A hypergraph $H$ is balanced if for every nonempty $H' \subseteq H$, we have
    \[\frac{|E(H')|}{|V(H')|} \leq \frac{|E(H)|}{|V(H)|}.\]
\end{definition}

Let us now define our hypergraph of interest $\tilde H$, whose existence is guaranteed by \cite{rucinski1986strongly} for $r = 2$ and by \cite{matushkin2022strictly} in the general case.

\begin{proposition}
\label{prop:existence-balanced-hypergraph}
Assuming \eqref{eqn:small_gamma_ub}, there exists a balanced hypergraph $\tilde H$ with $\ell$ vertices and $m$ edges satisfying
\begin{equation}
\frac{1}{\beta} < \frac{m}{\ell} < \frac{\gamma}{\alpha}.  \label{eq:edge-vertex-ratio-between} 
\end{equation}
\end{proposition}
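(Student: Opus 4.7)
The proof plan is to exhibit an appropriate rational $m/\ell$ lying in the prescribed interval and then quote existing density-realization results to produce a balanced $r$-uniform hypergraph with that edge-to-vertex ratio. The first step is to observe that the hypothesis $\alpha < \beta\gamma$ in \eqref{eqn:small_gamma_ub} rearranges precisely to $1/\beta < \gamma/\alpha$, so the open interval $(1/\beta, \gamma/\alpha)$ is nonempty and contains infinitely many rationals with arbitrarily large denominators.

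I would then fix any rational $m/\ell \in (1/\beta, \gamma/\alpha)$, taking $\ell$ large and $\gcd(m,\ell)=1$. Before invoking the realization theorem, I would check that this ratio lies in the range where balanced $r$-graphs are known to exist. Using the standing assumption $\beta < r-1$, the left endpoint satisfies $1/\beta > 1/(r-1) > 1/r$, so $m/\ell > 1/r$, which is compatible with having no isolated vertex inside a balanced $r$-uniform hypergraph. Up to this point the argument is one line of algebra.

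The main step is then to invoke the result of Ruci\'nski–Vince \cite{rucinski1986strongly} for $r=2$, and its $r$-uniform extension by Matushkin \cite{matushkin2022strictly} for $r \ge 3$, both of which guarantee that for every rational $c$ in the admissible range there exists a balanced $r$-uniform hypergraph whose edge-to-vertex ratio equals $c$. Applied to our chosen $c = m/\ell$, this produces the desired $\tilde H$ with $|V(\tilde H)| = \ell$ and $|E(\tilde H)| = m$, giving exactly \eqref{eq:edge-vertex-ratio-between}.

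I do not foresee any real obstacle: the proposition is essentially a black-box citation of existence theorems from combinatorics, and the only content on our side is the trivial algebraic rearrangement that translates the statistical condition $\alpha < \beta\gamma$ into the combinatorial condition $1/\beta < \gamma/\alpha$, together with a one-line check that the target density exceeds $1/r$.
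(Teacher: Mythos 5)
Your approach is the same as the paper's: rewrite $\alpha < \beta\gamma$ as $1/\beta < \gamma/\alpha$, pick a rational in the nonempty interval, verify it is large enough, and invoke Matushkin's theorem (Rucinski--Vince for $r=2$). There is, however, one imprecision worth flagging. You verify the bound $m/\ell > 1/r$ and call this ``compatible with having no isolated vertex,'' treating it as the admissibility condition for the realization theorems. That is not the right threshold. The condition that Matushkin's Theorem~2 (and, for $r=2$, Rucinski--Vince) requires is $m/\ell \ge 1/(r-1)$, which is strictly stronger than $m/\ell > 1/r$. For example, with $r=2$ the density $5/8 \in (1/2, 1)$ cannot be the density of any balanced graph, so ``$> 1/r$'' does not characterize the admissible range. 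Fortunately, you do observe the intermediate inequality $1/\beta > 1/(r-1)$ (this is exactly what $\beta < r-1$ gives), so the stronger bound is already in your chain; you should state that $m/\ell > 1/(r-1)$ is the hypothesis being matched against the cited theorem, rather than $m/\ell > 1/r$. One further small point: once you invoke Matushkin with the rational $\lambda$, the resulting hypergraph's vertex and edge counts need not be the coprime numerator and denominator of $\lambda$; this does no harm (the proposition only requires the ratio to land in the interval), but ``taking $\ell$ large and $\gcd(m,\ell)=1$'' is superfluous and slightly conflates your chosen rational with the produced graph's parameters.
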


\begin{proof}
Since $\alpha < \beta \gamma$ by assumption, there is a rational number $\lambda$ such that $\frac{1}{\beta} < \lambda < \frac{\gamma}{\alpha}$.
Furthermore, note that
\(\lambda > \frac{1}{\beta} \geq \frac{1}{r - 1}.\)
By \cite[Theorem~2]{matushkin2022strictly}, there exists a balanced hypergraph with $\ell$ vertices and $m$ edges such that $\frac{m}{\ell} = \lambda$, completing the proof.
\end{proof}

Similar to \eqref{eqn:phi_defn}, we define 
\begin{equation*}
    \psi_S(Y) \defeq \prod_{e \in S}Y_e, \quad \forall S \subseteq [M]. 
\end{equation*}
With $\tilde H$ in hand, we are ready to define $T$:
\begin{equation}
    T \defeq \sum_{S\in \calS} \psi_{S}(Y) , \, \quad \text{ where } \calS \defeq \set{S \subseteq [M]\,:\, K_n^r[S] \cong \tilde H} .
    \label{eq:def-t-statistic-2}
\end{equation}
In particular, $T$ counts the number of edge-induced subgraphs in $H$ isomorphic to $\tilde H$.
Since the number of edges $m$ in $\tilde H$ is a constant depending only on $(\alpha, \beta, \gamma)$, the statistic $T$ is a constant-degree polynomial in the entries of $Y$. 
We remark that $T$ is taken to be the (unsigned) count rather than the signed count mainly for technical convenience.
The fact that $\psi_S(Y)$ takes value only in $\{0,1\}$ simplifies the computation of the variance of $T$.

Before turning to the main result of this section, we note a simple fact about balanced hypergraphs that will be useful in the proof later.

\begin{claim}\label{claim:balanced_hypergraph_complement}
    Let $H$ be a balanced hypergraph and let $H'$ be a proper subhypergraph of $H$ with $V(H') \subsetneq V(H)$.
    Then,
    \[\frac{|E(H)| - |E(H')|}{|V(H)| - |V(H')|} \geq \frac{|E(H)|}{|V(H)|}.\]
\end{claim}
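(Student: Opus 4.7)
The plan is to reduce the claim to a short algebraic rearrangement of the balancedness condition itself. Write $v \defeq |V(H)|$, $e \defeq |E(H)|$, $v' \defeq |V(H')|$, $e' \defeq |E(H')|$. By hypothesis we have $v > v'$, so the denominator $v - v'$ is positive and we may freely cross-multiply.

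First I would dispose of the degenerate case. If $H'$ is the empty hypergraph, then $v' = e' = 0$, and the inequality to be proved becomes $e/v \ge e/v$, which holds trivially. So I assume henceforth that $H'$ is nonempty, in which case the balancedness of $H$ applied to the nonempty subhypergraph $H'$ gives
\begin{equation*}
\frac{e'}{v'} \le \frac{e}{v},
\end{equation*}
or equivalently $e' v \le e v'$.

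Next I would verify that the target inequality is algebraically equivalent to this. Multiplying both sides of $\frac{e - e'}{v - v'} \ge \frac{e}{v}$ by the positive quantity $v(v - v')$ gives
\begin{equation*}
v(e - e') \ge e(v - v'),
\end{equation*}
which, after cancelling $ve$ on both sides, simplifies to $-v e' \ge -e v'$, i.e., $e v' \ge e' v$. This is exactly the balancedness inequality established above, so the claim follows. I do not anticipate any real obstacle here; the content of the claim is essentially the observation that the density inequality $e'/v' \le e/v$ can be rewritten as a statement about the complement $(e - e')/(v - v')$, and this is purely a one-line computation once one recognizes it.
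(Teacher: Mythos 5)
Your proof is correct and uses essentially the same approach as the paper: handle the empty case trivially, then for nonempty $H'$ invoke the balancedness bound $|E(H')|/|V(H')| \le |E(H)|/|V(H)|$ and rearrange algebraically. The paper substitutes the bound $|E(H')| \le |V(H')| \cdot |E(H)|/|V(H)|$ into the numerator rather than cross-multiplying, but the two manipulations are equivalent one-liners.
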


\begin{proof}
    If $H' = \varnothing$, the inequality holds trivially. 
    Therefore, we may assume $H'$ is nonempty.
    As $H$ is balanced and $H'$ is nonempty, by Definition \ref{def:balanced_hypergraphs} we have
    \begin{align*}
        \frac{|E(H)| - |E(H')|}{|V(H)| - |V(H')|} \geq \frac{|E(H)| - |V(H')|\frac{|E(H)|}{|V(H)|}}{|V(H)| - |V(H')|} 
        = \frac{|E(H)|(|V(H)| - |V(H')|)}{|V(H)|(|V(H)| - |V(H')|)} 
        = \frac{|E(H)|}{|V(H)|},
    \end{align*}
    as desired.
\end{proof}

\subsubsection{Strong separation}

We now show that the statistic $T$ strongly separates $\calP$ and $\calQ$ in the regime of interest.

\begin{proposition}
\label{prop:upper-small-gamma}
Suppose $p = n^{-\alpha}$, $q = n^{-\beta}$, and $\rho = n^{\gamma - 1}$ with fixed $r \ge 2$, $0 < \alpha < \beta < r-1$, and $\gamma \in (0,1)$ such that \eqref{eqn:small_gamma_ub} holds.
The statistic $T$ defined in \eqref{eq:def-t-statistic-2} satisfies
\( \sqrt{\Var_{\calP}(T) \vee \Var_{\calQ}(T)} = o\left(\left|\E_{\calP}[T] - \E_{\calQ}[T]\right|\right) .\)
\end{proposition}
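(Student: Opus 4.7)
The plan is to verify the strong-separation criterion directly via first- and second-moment computations. Setting $\lambda \defeq m/\ell$, I first note two key inequalities: from $\lambda > 1/\beta$ and $\alpha < \beta\gamma$ we get $\lambda > (1-\gamma)/(\beta-\alpha)$, equivalently $(\beta-\alpha)\lambda > 1-\gamma$; and from $\lambda < \gamma/\alpha$ we get $\alpha\lambda < \gamma$. These drive everything.

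For the means, conditioning on $Z$ and using $p\,\mathbbm{1}[e\subseteq Z] + q(1-\mathbbm{1}[e\subseteq Z]) = q + (p-q)\mathbbm{1}[e\subseteq Z]$ gives the expansion
\[
\E_\calP[\psi_S(Y)] = \sum_{T \subseteq E(S)} (p-q)^{|T|} q^{m-|T|} \rho^{|V(T)|}.
\]
Claim~\ref{claim:balanced_hypergraph_complement} applied to $\tilde H$ yields $(m-|T|)/(\ell-|V(T)|) \geq \lambda$ for every proper $T$ with $V(T) \subsetneq V(S)$; combined with $(\beta-\alpha)\lambda > 1-\gamma$, this shows the $T = E(S)$ term dominates, so $\E_\calP[\psi_S] = \Theta(p^m\rho^\ell)$. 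Hence $|\E_\calP[T] - \E_\calQ[T]| = \Theta(|\calS| p^m \rho^\ell) = n^{\gamma\ell - \alpha m + o(1)}$, which is polynomially large since $\lambda < \gamma/\alpha$. For $\Var_\calQ(T)$, edges are i.i.d.\ under $\calQ$, so a standard subgraph-count variance calculation gives
\[
\Var_\calQ(T) \lesssim \sum_{H_0 \subseteq \tilde H,\, H_0 \neq \varnothing} n^{2\ell - v(H_0)}\, q^{2m - e(H_0)},
\]
indexed by the isomorphism type of the edge intersection. Using $e(H_0)/v(H_0) \leq \lambda$ and $(\beta-\alpha)\lambda > 1-\gamma$, each exponent $2\ell - v(H_0) - \beta(2m-e(H_0))$ falls strictly below $2(\gamma\ell - \alpha m)$, so $\Var_\calQ(T) = o((\E_\calP[T]-\E_\calQ[T])^2)$.

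For $\Var_\calP(T)$ I bound $\E_\calP[T^2] = \sum_{(S,S')}\E_\calP[\psi_S\psi_{S'}]$. Pairs with $V(S) \cap V(S') = \varnothing$ are independent under $\calP$ and their contribution equals $(\E_\calP[T])^2(1+o(1))$. For overlapping pairs, the analogous expansion
\[
\E_\calP[\psi_S\psi_{S'}] = \sum_{T \subseteq U}(p-q)^{|T|} q^{|U|-|T|}\rho^{|V(T)|}, \qquad U \defeq E(S) \cup E(S'),
\]
should be dominated by the $T = U$ term, giving $\E_\calP[\psi_S\psi_{S'}] = \Theta(p^{|U|}\rho^{|V(U)|}) = \Theta(p^{2m-k}\rho^{2\ell-a})$, where $(a,k) \defeq (|V(S)\cap V(S')|, |E(S)\cap E(S')|)$. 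Summing over the $\Theta(n^{2\ell - a})$ pairs of each overlap type, the contribution-to-$(\E_\calP[T])^2$ ratio is $n^{-\gamma a + \alpha k}$. For $a \geq 1, k = 0$ this is $n^{-\gamma a} = o(1)$; for $k \geq 1$, letting $H_0$ denote the edge-intersection we have $a \geq v(H_0)$ and $k \leq \lambda\, v(H_0) \leq \lambda a$, so $\alpha k - \gamma a \leq (\alpha\lambda-\gamma)a < 0$ by $\alpha\lambda < \gamma$. Thus overlapping pairs contribute $o((\E_\calP[T])^2)$, yielding $\Var_\calP(T) = o((\E_\calP[T])^2) = o((\E_\calP[T]-\E_\calQ[T])^2)$.

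The main obstacle is justifying that $T = U$ dominates the expansion of $\E_\calP[\psi_S\psi_{S'}]$ for overlapping pairs: the bound $(|U|-|T|)/(|V(U)|-|V(T)|) \geq \lambda$ (for $V(T)\subsetneq V(U)$) is required so the ratio of the general term to the $T = U$ term is $\leq 1$, but $U$ itself is not a subhypergraph of $\tilde H$ and can have edge-to-vertex ratio exceeding $\lambda$, so Claim~\ref{claim:balanced_hypergraph_complement} does not apply to $U$ directly. I plan to decompose the removed set $W \defeq U \setminus T$ into $W_S \defeq W \cap E(S)$ and $W_{S'} \defeq W \cap E(S')$, apply Claim~\ref{claim:balanced_hypergraph_complement} inside each copy of $\tilde H$ to bound $|W_S|$ and $|W_{S'}|$ below in terms of the vertices they remove from $V(S)$ and $V(S')$, and finally reconcile the two bounds while accounting for edges shared between $E(S)$ and $E(S')$. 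This combinatorial bookkeeping is the most delicate step of the proof.
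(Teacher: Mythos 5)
Your overall structure matches the paper's proof of Proposition~\ref{prop:upper-small-gamma}: you use the same balanced-subhypergraph count $T$, the same two driving inequalities $(\beta-\alpha)\lambda > 1-\gamma$ and $\alpha\lambda < \gamma$, and essentially the same first- and second-moment computations under $\calP$ and $\calQ$. Your $\Var_\calQ$ bound, your treatment of the means, and your overlap-type bookkeeping for $\Var_\calP$ all check out. Crucially, you also correctly identify that the whole argument hinges on showing that, in the expansion of $\EE_\calP[\psi_{S_1\cup S_2}]$, the term with the full edge set $U = E(S_1)\cup E(S_2)$ dominates — this is exactly the paper's Lemma~\ref{lem:psi-s-1-s-2-exp-p}, and you correctly observe that Claim~\ref{claim:balanced_hypergraph_complement} does not apply directly to $U$.

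However, you leave that step as a plan rather than a proof, and the plan as stated has a genuine obstruction. Decomposing the removed set $W = U\setminus T$ symmetrically into $W_S = W\cap E(S)$ and $W_{S'} = W\cap E(S')$ double-counts the shared removed edges: $|W| = |W_S| + |W_{S'}| - |W_S\cap W_{S'}|$, and applying Claim~\ref{claim:balanced_hypergraph_complement} to each copy only gives $|W_S| + |W_{S'}| \geq \lambda\bigl[(|V(S)|-|V(T\cap E(S))|) + (|V(S')|-|V(T\cap E(S'))|)\bigr]$, whereas the vertex difference $|V(U)|-|V(T)|$ can equal the unsubtracted sum of the two per-copy vertex differences. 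You would then need $\lambda\bigl[|V(S)\cap V(S')| - |V(T\cap E(S))\cap V(T\cap E(S'))|\bigr] \geq |W_S\cap W_{S'}|$, which is not automatic and is precisely the ``reconciliation'' you flag but do not carry out. The paper sidesteps this by an \emph{asymmetric} decomposition of the kept set: $S'_1 = S'\cap S_1$ and $S'_2 = S'\setminus S_1$ form a disjoint partition of $S'$, and — this is the key move — Claim~\ref{claim:balanced_hypergraph_complement} is applied in the second copy $S_2$ with the \emph{enlarged} subhypergraph $(S_1\cap S_2)\cup S'_2$ rather than $S'\cap S_2$. This enlargement attributes every shared removed edge to the first copy only, so the two removed sets partition $W$ exactly and the edge count is an identity rather than an inequality, while the vertex count still obeys the needed $\le$ via inclusion--exclusion. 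Without this (or an equivalent) trick, the bound $|W| \geq \lambda(|V(U)|-|V(T)|)$ does not follow, so as written your proof has a real gap at its central step.
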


Let us first consider $T$ under $\calQ$.
We have the following lemma.

\begin{lemma}\label{lemma:Q_bounds_T}
    With $T$ as defined in \eqref{eq:def-t-statistic-2}, we have
    \[\EE_\calQ[T] = N\,q^m, \quad \Var_\calQ(T) \leq \max\left\{m^2\,N\,n^{\ell(1 - 1/m)}\,\ell^{r(m-1)}\,q^{2m - 1}, m^{m+1}\EE_\calQ[T]\right\},\]
    where $N = |\calS|$.
\end{lemma}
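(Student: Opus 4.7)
The expectation is immediate: under $\calQ$ the entries $\{Y_e\}_{e \in [M]}$ are i.i.d.\ $\BER(q)$, so for every $S \in \calS$ (which has $|E(S)| = m$) we have $\EE_\calQ[\psi_S(Y)] = q^m$, and summing over the $N$ elements of $\calS$ gives $\EE_\calQ[T] = N q^m$.

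For the variance, I expand $\Var_\calQ(T) = \sum_{S, S' \in \calS} \Cov_\calQ(\psi_S(Y), \psi_{S'}(Y))$ and partition the double sum by $k \defeq |E(S) \cap E(S')|$. Independence of distinct $Y_e$'s under $\calQ$ gives $\Cov_\calQ(\psi_S, \psi_{S'}) = q^{2m-k} - q^{2m}$, so the $k = 0$ terms vanish, and the diagonal case $k = m$ contributes at most $\sum_S \EE_\calQ[\psi_S] = N q^m = \EE_\calQ[T]$. This is already absorbed (up to the $m^{m+1}$ factor) into the second term of the max.

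The substantive step is the off-diagonal range $1 \le k \le m-1$. I fix $S \in \calS$ ($N$ choices) and a $k$-edge subset $F \subseteq E(S)$ ($\binom{m}{k} \le m^m$ choices), and count the number of $S' \in \calS$ containing $F$. Such an $S'$ corresponds to selecting a $k$-edge subhypergraph $H' \subseteq \tilde H$, fixing an isomorphism $H' \to F$, and then embedding the remaining $\ell - |V(H')|$ vertices of $\tilde H$ into $[n]$. The balanced property (Definition~\ref{def:balanced_hypergraphs}, equivalently Claim~\ref{claim:balanced_hypergraph_complement}) yields $|V(H')| \ge k\ell/m$, so there are at most $n^{\ell(1 - k/m)}$ such vertex extensions, and a crude bound of $\ell^{r(m-1)}$ handles the combinatorial enumeration of subhypergraphs of $\tilde H$ and the isomorphism choice. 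Multiplying by the covariance bound $q^{2m-k}$ shows that the total contribution from pairs with shared-edge count exactly $k$ is at most $m^2 \ell^{r(m-1)} N n^{\ell(1 - k/m)} q^{2m-k}$.

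For $k = 1$, this is precisely the first term in the max. For $2 \le k \le m-1$, the construction in Proposition~\ref{prop:existence-balanced-hypergraph} guarantees $m/\ell > 1/\beta$, so
\[ n^{\ell(1 - k/m)} q^{m - k} \;=\; n^{(m - k)(\ell/m - \beta)} \;=\; o(1) , \]
meaning each such term is $o(\EE_\calQ[T])$ and the sum over $k = 2, \ldots, m-1$ is absorbed into $m^{m+1} \EE_\calQ[T]$. Combining the diagonal bound with these two regimes produces the claimed maximum. The main technical step is the clean application of the balanced condition to control $|V(H')|$ from below in the enumeration of copies of $\tilde H$ containing a given $k$-edge subhypergraph; the rest is careful but routine bookkeeping of combinatorial constants.
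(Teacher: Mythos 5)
Your expectation computation and the overall variance decomposition match the paper's approach, and you correctly identify the key step: using the balanced property of $\tilde H$ to lower-bound $|V(H')|$ by $k\ell/m$ when enumerating copies of $\tilde H$ containing a fixed $k$-edge subhypergraph. This is the same structural argument the paper uses.

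However, there is a gap in your final combination step. You bound the $k=1$ contribution by the first term of the max, and you bound the diagonal ($k=m$) plus intermediate ($2 \le k \le m-1$) contributions by $m^{m+1}\EE_\calQ[T]$. But adding these does not give $\max\{A, B\}$; it gives $A + B$, which can be as large as $2\max\{A, B\}$. In particular, if the first term of the max dominates, your argument yields (first term) $+ \;2\EE_\calQ[T]$, which strictly exceeds the first term and hence the claimed max. The paper avoids this by treating the entire sum over the intersection size $m'$ as a single geometric-like series: it bounds
\[
\sum_{m'=1}^{m} x^{m'} \le m \max\bigl\{x, x^m\bigr\}, \qquad x \defeq \frac{m}{\ell^r\, n^{\ell/m}\, q},
\]
which is valid because each term of a geometric progression is dominated by one of the two endpoints. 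Multiplying through then produces exactly $\max\{m^2 N n^{\ell(1-1/m)}\ell^{r(m-1)}q^{2m-1},\; m^{m+1}\EE_\calQ[T]\}$. Your separate treatment of $k=1$, $k=m$, and intermediate $k$ loses this, and the asymptotic $o(1)$ absorption you invoke for intermediate $k$ is also not needed in the paper (its bound is non-asymptotic and exact). A minor additional issue: you cite $\binom{m}{k} \le m^m$ but then report an $m^2$ factor in each $y_k$; the $m^2$ is the right constant only for $k=1$, and the origin of both factors of $m$ (one from $\binom{m}{1}$, one from bounding the sum by $m$ times its max term) is more transparent in the unified geometric-series argument. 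To repair your proof, replace the case analysis on $k$ with the single bound $\sum_{k=1}^m x^k \le m\max\{x, x^m\}$ and then distribute, or accept a weakening of the lemma to a sum (which suffices for the downstream Proposition~\ref{prop:upper-small-gamma}).
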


\begin{proof}
    As $Y_e$ and $Y_f$ are independent for $e \neq f$ under $\calQ$, we note that for any $S \subseteq [M]$, we have
    \[\psi_{S}(Y) \sim \BER(q^m),\]
    from where the result on $\EE_\calQ[T]$ follows.
    For the variance, we have
    \[\Var_\calQ(T) = \sum_{S_1, S_2 \in \calS}\Cov_\calQ\left(\psi_{S_1}(Y), \psi_{S_2}(Y)\right).\]
    Note that $\psi_{S_1}(Y)$ and $\psi_{S_2}(Y)$ are independent if $S_1\cap S_2 = \varnothing$.
    Furthermore, we have 
    \begin{align*}
        \psi_{S_1}(Y)\psi_{S_2}(Y) = \prod_{e \in S_1}Y_e\prod_{e \in S_2}Y_e = \prod_{e \in S_1\cap S_2}Y_e^2 \prod_{e \in S_1\Delta S_2}Y_e = \prod_{e \in S_1\cap S_2}Y_e\prod_{e \in S_1\Delta S_2}Y_e = \prod_{e \in S_1\cup S_2}Y_e,
    \end{align*}
    as $Y_e^2 = Y_e$. In particular, we can bound the variance by
    \begin{align}
        \Var_\calQ(T) &= \sum_{\substack{S_1, S_2 \in \calS, \\ S_1 \cap S_2 \neq \varnothing}}\Cov_\calQ\left(\psi_{S_1}(Y), \psi_{S_2}(Y)\right) \notag \\
        &= \sum_{\substack{S_1, S_2 \in \calS, \\ S_1 \cap S_2 \neq \varnothing}} \left( \EE_\calQ\left[\psi_{S_1}(Y)\psi_{S_2}(Y)\right] - \EE_\calQ\left[\psi_{S_1}(Y)\right] \, \EE_\calQ\left[\psi_{S_2}(Y)\right] \right) \notag \\
        &\leq \sum_{\substack{S_1, S_2 \in \calS, \\ S_1 \cap S_2 \neq \varnothing}}\EE_\calQ\left[\psi_{S_1\cup S_2}(Y)\right] \label{eq:var-s-1-s-2-union} \\
        &= \sum_{\substack{S_1, S_2 \in \calS, \\ S_1 \cap S_2 \neq \varnothing}}q^{|S_1\cup S_2|} = q^{2m}\sum_{\substack{S_1, S_2 \in \calS, \\ S_1 \cap S_2 \neq \varnothing}}q^{-|S_1\cap S_2|} . \notag
    \end{align}
    Let us fix $S_1$.
    We will consider bounding
    \begin{align}\label{eqn:Q_bound_for_S1}
        \sum_{\substack{S_2 \in \calS, \\ S_1 \cap S_2 \neq \varnothing}}q^{-|S_1\cap S_2|} \leq \sum_{m' = 1}^m\sum_{\substack{S' \subseteq S_1, \\ |S'| = m'}}\sum_{\substack{S_2 \in \calS, \\ S' \subseteq S_2}}q^{-m'} .
    \end{align}
    Let us first consider a fixed $S'$ and bound the number of choices for $S_2$ in the inner-most sum.
    There are at most $n^{\ell - |V(S')|}$ choices for the remaining vertices of $S_2$ and at most $\ell^{r(m-m')}$ choices for the remaining edges $S_2 \setminus S'$.
    We note that, as $S' \subseteq S_1$ and $K_n^r[S_1]$ is isomorphic to a balanced hypergraph $\tilde H$, we have
    \[\frac{|S'|}{|V(S')|} \leq \frac{m}{\ell} \implies |V(S')| \geq \frac{\ell|S'|}{m}.\]
    In particular, we can bound \eqref{eqn:Q_bound_for_S1} by
    \begin{align*}
        \sum_{\substack{S_2 \in \calS, \\ S_1 \cap S_2 \neq \varnothing}}q^{-|S_1\cap S_2|} 
        \leq \sum_{m' = 1}^m \sum_{\substack{S' \subseteq S_1, \\ |S'| = m'}}n^{\ell - |V(S')|}\,\ell^{r(m-m')}\,q^{-m'} 
        \leq n^\ell\,\ell^{rm}\sum_{m' = 1}^m \left(\frac{m}{\ell^r\,n^{\ell/m}\,q}\right)^{m'},
    \end{align*}
    where we use the fact that there are at most $m^{m'}$ choices for $S'$, given $S_1$.
    Note that the sum above is dominated by either the first or last term. Therefore, we get
    \[ \sum_{\substack{S_2 \in \calS, \\ S_1 \cap S_2 \neq \varnothing}} q^{-|S_1\cap S_2|}  \leq mn^\ell\,\ell^{rm}\max\left\{\frac{m}{\ell^r\,n^{\ell/m}\,q}, \left(\frac{m}{\ell^r\,n^{\ell/m}\,q}\right)^m\right\}.\]
    With this in hand, we return to our original computation of the variance to get
    \begin{align*}
        \Var_\calQ(T) &\leq m\,N\,q^{2m}\,n^\ell\,\ell^{rm}\max\left\{\frac{m}{\ell^r\,n^{\ell/m}\,q}, \left(\frac{m}{\ell^r\,n^{\ell/m}\,q}\right)^m\right\} \\
        &= \max\left\{m^2\,N\,n^{\ell(1 - 1/m)}\,\ell^{r(m-1)}\,q^{2m - 1}, m^{m+1}N\,q^{m}\right\},
    \end{align*}
    as desired.
\end{proof}

Before studying the expectation and variance of $T$ under $\calP$, we establish the following lemma.

\begin{lemma}
\label{lem:psi-s-1-s-2-exp-p}
Fix $S_1, S_2 \in \calS$ as defined in \eqref{eq:def-t-statistic-2}.
Assume \eqref{eqn:small_gamma_ub} and \eqref{eq:edge-vertex-ratio-between}.
Then we have
$$
\EE_\calP\left[\psi_{S_1\cup S_2}(Y)\right] \le 2^{2 \ell} \, \rho^{|V(S_1\cup S_2)|} \, p^{|S_1\cup S_2|} .
$$
\end{lemma}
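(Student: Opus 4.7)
The plan is to condition on $A := Z \cap V(T)$ where $T := S_1 \cup S_2$. Since the edges of $Y$ are conditionally independent given $Z$ with $\EE_\calP[Y_e \mid Z] = p$ for $e \subseteq Z$ and $q$ otherwise, and only the intersection $A = Z \cap V(T)$ matters for $\psi_T(Y)$, writing $m(A)$ for the number of edges of $T$ fully contained in $A$ gives the exact identity
\[
\EE_\calP[\psi_T(Y)] = \sum_{A \subseteq V(T)} \rho^{|A|}(1-\rho)^{|V(T)|-|A|}\, p^{m(A)} \, q^{|T|-m(A)}.
\]
The main term corresponds to $A = V(T)$, which equals exactly $\rho^{|V(T)|} p^{|T|}$. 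The strategy is to bound the ratio of every other term to this main term by $1$, and then use the fact that there are at most $2^{|V(T)|} \le 2^{2\ell}$ subsets $A$ to absorb the sum into the factor $2^{2\ell}$.

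Writing $U := V(T) \setminus A$ and $m^\star(U) := |T| - m(A)$, i.e., the number of edges of $T$ incident to $U$, the ratio of the $A$-term to the main term is at most $\rho^{-|U|}(q/p)^{m^\star(U)} = n^{(1-\gamma)|U| - (\beta-\alpha)m^\star(U)}$. The hypothesis $\alpha < \beta\gamma$ combined with $m/\ell > 1/\beta$ (the left inequality in \eqref{eq:edge-vertex-ratio-between}) yields $m/\ell > (1-\gamma)/(\beta - \alpha)$, so the ratio is at most $1$ provided we prove the purely combinatorial estimate
\[
m^\star(U) \ge \tfrac{m}{\ell}\,|U| \qquad \text{for every nonempty } U \subseteq V(T).
\]

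To establish this estimate, I would apply Claim~\ref{claim:balanced_hypergraph_complement} to each $S_i$ with the subhypergraph whose edges lie inside $V(S_i) \setminus U_i$, where $U_i := U \cap V(S_i)$; this yields that the number of edges of $S_i$ incident to $U_i$ is at least $(m/\ell)|U_i|$. Since $T \supseteq S_i$, the bound $\max(|U_1|, |U_2|) \ge |U|/2$ immediately gives $m^\star(U) \ge (m/(2\ell))|U|$, but this loses a factor of two that is too costly when $\alpha$ is close to $\beta\gamma$. To close the gap I would decompose the edges of $T$ incident to $U$ into those in $S_1 \setminus S_2$ (touching $U_1$), those in $S_2 \setminus S_1$ (touching $U_2$), and those in $S_1 \cap S_2$ (touching $U_1 \cap U_2$), and use the auxiliary bound $|S_1 \cap S_2| \le (m/\ell)\,|V(S_1) \cap V(S_2)|$, which follows by applying balancedness of $S_1$ to the subhypergraph of $S_1$ whose edges lie in $V(S_1) \cap V(S_2)$. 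Effectively this argument shows that $T$ itself inherits the balanced property with density at least $m/\ell$, so the desired combinatorial estimate follows from one further application of Claim~\ref{claim:balanced_hypergraph_complement} to $T$. The main technical obstacle is precisely this inclusion--exclusion step removing the factor of two, since the naive $\max$ argument is tight but insufficient in the critical regime; the remaining steps are routine arithmetic.
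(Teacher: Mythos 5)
Your reduction to the exact conditional identity and the combinatorial estimate $m^\star(U) \ge \frac{m}{\ell}|U|$ matches the paper's approach (they phrase it in terms of edge-induced subgraphs $S' \subseteq S_1 \cup S_2$, but the content is the same after ignoring the $(1-\rho)^{|U|} \le 1$ factor). The estimate itself is correct. However, your proposed argument for it has a genuine gap.

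The inclusion--exclusion route requires $m^\star_{12}(U_{12}) \le \frac{m}{\ell}|U_{12}|$, where $m^\star_{12}(U_{12})$ counts edges of $S_1 \cap S_2$ incident to $U_{12} := U \cap V(S_1) \cap V(S_2)$. This is false: if $S_1 \cap S_2$ is a star with center $c$ and $k$ leaves (density $\frac{k}{k+1} \le \frac{m}{\ell}$, so it can be a subgraph of $S_1$), then $U_{12} = \{c\}$ gives $m^\star_{12}(U_{12}) = k$, which can exceed $\frac{m}{\ell}$. The auxiliary bound $|S_1\cap S_2| \le \frac{m}{\ell}|V(S_1)\cap V(S_2)|$ controls the total size of the intersection, not the local count incident to $U_{12}$, so it does not resolve this. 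And the assertion that ``$T$ itself inherits the balanced property'' is false in general: take $\tilde H$ to be $K_3$ with a pendant vertex (balanced, density $1$), $S_1$ with triangle on $\{1,2,3\}$ and pendant $\{3,4\}$, $S_2$ with triangle on $\{1,2,5\}$ and pendant $\{5,3\}$; then $T$ has $5$ vertices and $7$ edges (density $7/5$), but $T[\{1,2,3,5\}] = K_4$ has density $3/2 > 7/5$, so $T$ is not balanced and Claim~\ref{claim:balanced_hypergraph_complement} does not apply to it. (This particular $\tilde H$ falls outside the parameter regime when $r=2$, but the counterexample already shows the purported general combinatorial principle is wrong, so it cannot be the engine of the proof.)

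The paper avoids this via an \emph{asymmetric} decomposition: it applies Claim~\ref{claim:balanced_hypergraph_complement} to $S_1$ with subgraph $S'_1 := S' \cap S_1$, and to $S_2$ with subgraph $(S_1 \cap S_2) \cup (S' \setminus S_1)$, rather than with $S' \cap S_2$. In your notation this amounts to grouping the intersection edges entirely with $S_1$: write $m^\star(U) = (\text{edges of } S_1 \text{ hitting } U_1) + (\text{edges of } S_2 \setminus S_1 \text{ hitting } U)$; the first term is $\ge \frac{m}{\ell}|U_1|$, and the second term is the complement in $S_2$ of $(S_1\cap S_2) \cup S_2[A]$, whose vertex deficit in $S_2$ is at least $|U_2 \setminus U_1| = |U_2| - |U_{12}|$, so Claim~\ref{claim:balanced_hypergraph_complement} on $S_2$ gives $\ge \frac{m}{\ell}(|U_2| - |U_{12}|)$. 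Summing gives $\frac{m}{\ell}(|U_1| + |U_2| - |U_{12}|) = \frac{m}{\ell}|U|$ with no factor-of-two loss. Replacing your inclusion--exclusion/``$T$-balanced'' step with this asymmetric grouping would complete the proof.
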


\begin{proof}
By considering all possible realizations of $V(S_1 \cup S_2) \cap Z$, we have 
\begin{align*}
\EE_\calP\left[\psi_{S_1\cup S_2}(Y)\right] 
&= \sum_{V'\subseteq V(S_1\cup S_2)} \calP( V(S_1 \cup S_2) \cap Z = V' ) \, \calP( \psi_{S_1 \cup S_2}(Y) = 1 \mid V(S_1 \cup S_2) \cap Z = V' ) \\ 
&\leq \sum_{V'\subseteq V(S_1\cup S_2)} \rho^{|V'|}p^{|E(V')|}\,q^{|S_1\cup S_2| - |E(V')|} ,
\end{align*}
where $E(V')$ consists of the edges in $S_1\cup S_2$ whose vertices are contained entirely in $V'$.
We claim that the summand above is maximized when $V' = V(S_1\cup S_2)$.
Given the claim, each summand is bounded by $\rho^{|V(S_1 \cup S_2)|} \, p^{|S_1 \cup S_2|}$, and there are at most $2^{2 \ell}$ choices of $V'$.
Thus, the result follows.

To prove the claim, we first note that if $V'$ maximizes $\rho^{|V'|}p^{|E(V')|}\,q^{|S_1\cup S_2| - |E(V')|}$, then the subgraph $(V', E(V'))$ has no isolated vertices.
This is because removing isolated vertices from $V'$ only decreases the exponent in $\rho$, which then increases the overall value as $\rho < 1$.
As a result, we may instead consider the problem
$$
\max_{S' \subseteq S_1 \cup S_2} \rho^{|V(S')|} \, p^{|S'|} \, q^{|S_1\cup S_2| - |S'|} ,
$$
where $V(S')$ is the set of vertices induced by $S'$.
We will show that the maximizer is $S' = S_1 \cup S_2$.
We make the following definitions:
\[S \defeq S_1 \cup S_2 , \quad S_1' \defeq S'\cap S_1, \quad S_2' \defeq S' \setminus S_1'.\]
In particular, $S_i' \subseteq S_i$ and $S_2' \cap S_1' = S_2' \cap S_1 = \varnothing$.
The goal is to show the following:
\[\rho^{|V(S)|}p^{|S|} \geq \rho^{|V(S')|}p^{|S'|}\,q^{|S| - |S'|},\]
which is equivalent to showing
\begin{align}\label{eqn:max_S_bound}
\rho^{|V(S)| - |V(S')|}\left(\frac{p}{q}\right)^{|S| - |S'|} \geq 1.
\end{align}

Let us first consider the exponent of $p/q$. We have
\[|S| - |S'| = |S_1| + |S_2| - |S_1\cap S_2| - |S_1'| - |S_2'| = |S_1| - |S_1'| + |S_2| - |(S_1\cap S_2) \cup S_2'|,\]
as $S_2'\cap S_1\cap S_2 = \varnothing$.
Next, let us consider the exponent of $\rho$.
We have
\begin{align*}
|V(S)| - |V(S')| &= |V(S_1)| + |V(S_2)| - |V(S_1) \cap V(S_2)| - |V(S_1')| - |V(S_2')| + |V(S_1') \cap V(S_2')| \\
&= |V(S_1)| - |V(S_1')| + |V(S_2)| - |V(S_1) \cap V(S_2)| - |V(S_2')| + |V(S_1') \cap V(S_2')|.
\end{align*}
We note the following:
\[V(S_1') \cap V(S_2') \subseteq V(S_1) \cap V(S_2) \cap V(S_2'),\]
as $S_i' \subseteq S_i$.
Therefore,
\begin{align*}
&|V(S)| - |V(S')| \\
&\le |V(S_1)| - |V(S_1')| + |V(S_2)| - |V(S_1) \cap V(S_2)| - |V(S_2')| + |V(S_1) \cap V(S_2) \cap V(S_2')| \\
&\leq |V(S_1)| - |V(S_1')| + |V(S_2)| - |(V(S_1) \cap V(S_2))\cup V(S_2')| \\
&\leq |V(S_1)| - |V(S_1')| + |V(S_2)| - |V((S_1\cap S_2)\cup S_2')|,
\end{align*}
where the last inequality follows from the fact that
\[V((S_1\cap S_2) \cup S_2') 
= V(S_1\cap S_2) \cup V(S_2') 
\subseteq (V(S_1) \cap V(S_2))\cup V(S_2').\]
We can now simplify the LHS of \eqref{eqn:max_S_bound} to get:
\begin{align*}
&~\rho^{|V(S)| - |V(S')|}\left(\frac{p}{q}\right)^{|S| - |S'|} \\
&\geq \rho^{|V(S_1)| - |V(S_1')| + |V(S_2)| - |V((S_1\cap S_2)\cup S_2')|}\left(\frac{p}{q}\right)^{|S_1| - |S_1'| + |S_2| - |(S_1\cap S_2) \cup S_2'|} \\
&= \rho^{|V(S_1)| - |V(S_1')|}\left(\frac{p}{q}\right)^{|S_1| - |S_1'|}\rho^{|V(S_2)| - |V((S_1\cap S_2)\cup S_2')|}\left(\frac{p}{q}\right)^{|S_2| - |(S_1\cap S_2) \cup S_2'|}. 
\end{align*}
Hence, it suffices to show that
\begin{subequations}
\begin{align}
&\rho^{|V(S_1)| - |V(S_1')|} \left(\frac{p}{q}\right)^{|S_1| - |S_1'|} \ge 1 , \label{eq:two-terms-at-least-1} \\
&\rho^{|V(S_2)| - |V((S_1\cap S_2)\cup S_2')|} \left(\frac{p}{q}\right)^{|S_2| - |(S_1\cap S_2) \cup S_2'|} \ge 1 . \label{eq:two-terms-at-least-2}
\end{align}
\end{subequations}

We now prove \eqref{eq:two-terms-at-least-1}.
The proof of \eqref{eq:two-terms-at-least-2} is the same once $S_1$ is replaced by $S_2$ and $S_1'$ is replaced by $(S_1\cap S_2)\cup S_2'$.
First, if $V(S_1') = V(S_1)$, then the LHS of \eqref{eq:two-terms-at-least-1} can be simplified to $(\frac{p}{q})^{|S_1| - |S_1'|}$, which is at least $1$ as $p > q$.
Therefore, we may assume that $V(S_1') \subsetneq V(S_1)$.
Recall that $K_n^r[S_1]$ is isomorphic to $\tilde H$, a balanced hypergraph with $\ell$ vertices and $m$ edges.
By Claim \ref{claim:balanced_hypergraph_complement}, we have
\begin{align*}
\rho^{|V(S_1)| - |V(S_1')|} \left(\frac{p}{q}\right)^{|S_1| - |S_1'|}
= \left(\rho\left(\frac{p}{q}\right)^{\frac{|S_1| - |S_1'|}{|V(S_1)| - |V(S_1')|}}\right)^{|V(S_1)| - |V(S_1')|} 
\geq \left(\rho\left(\frac{p}{q}\right)^{m/\ell}\right)^{|V(S_1)| - |V(S_1')|}.
\end{align*}
Note that
\[\rho\left(\frac{p}{q}\right)^{m/\ell} = n^{\gamma - 1 + (\beta - \alpha)m/\ell}.\]
It is enough to show the exponent above is nonnegative, which holds if $(\beta - \alpha)m \geq (1-\gamma)\ell$.
From \eqref{eqn:small_gamma_ub} and \eqref{eq:edge-vertex-ratio-between}, we have
\[(\beta - \alpha)m > (\beta - \beta\gamma)m = (1- \gamma)\beta m > (1-\gamma)\ell,\]
as desired.
\end{proof}

Let us now bound expectation and variance of $T$ under $\calP$.

\begin{lemma}\label{lemma:P_bounds_T}
    With $T$ as defined in \eqref{eq:def-t-statistic-2}, we have
    \[\EE_\calP[T] \geq N\,\rho^\ell p^m, \quad \Var_\calP(T) \leq 8^{\ell}\,N\,n^{\ell - 1}\,\ell^{1 + rm}\,\rho^{2\ell - 1}p^{2m - m/\ell},\]
    where $N = |\calS|$.
\end{lemma}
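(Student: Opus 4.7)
The plan splits naturally into an expectation lower bound (nearly immediate) and a variance upper bound (the main work).

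For $\EE_\calP[T]$, I would isolate, for each $S \in \calS$, the contribution from the event $\{V(S) \subseteq Z\}$. This event has probability exactly $\rho^\ell$, and conditional on it every edge of $S$ appears independently with probability $p$, so $\EE_\calP[\psi_S(Y)] \ge \calP(V(S) \subseteq Z) \cdot p^m = \rho^\ell p^m$. Summing over the $N$ elements of $\calS$ gives $\EE_\calP[T] \ge N\rho^\ell p^m$.

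For the variance, I would expand
\[
\Var_\calP(T) \;=\; \sum_{S_1, S_2 \in \calS} \Cov_\calP(\psi_{S_1}(Y), \psi_{S_2}(Y))
\]
and first argue that only pairs with $V(S_1) \cap V(S_2) \neq \varnothing$ contribute. Indeed, if $V(S_1) \cap V(S_2) = \varnothing$, then $(z_i)_{i \in V(S_1)}$ and $(z_i)_{i \in V(S_2)}$ are independent and $S_1 \cap S_2 = \varnothing$, so conditioning on $Z$ makes $\psi_{S_1}$ and $\psi_{S_2}$ independent under $\calP$ and the covariance vanishes. For the remaining pairs, since $\psi_{S_i}(Y) \in \{0,1\}$ we have $\psi_{S_1}\psi_{S_2} = \psi_{S_1 \cup S_2}$, whence $\Cov_\calP \le \EE_\calP[\psi_{S_1 \cup S_2}(Y)] \le 2^{2\ell}\rho^{|V(S_1 \cup S_2)|} p^{|S_1 \cup S_2|}$ by Lemma~\ref{lem:psi-s-1-s-2-exp-p}.

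The main combinatorial step is, for fixed $S_1$, to enumerate by $k := |V(S_1) \cap V(S_2)| \in \{1, \ldots, \ell\}$. The number of $S_2 \in \calS$ sharing exactly $k$ vertices with $S_1$ is bounded by $\binom{\ell}{k}\binom{n-\ell}{\ell-k}\binom{\binom{\ell}{r}}{m} \le \ell^k n^{\ell-k}\ell^{rm}$. The key use of balance of $\tilde H$ enters via the exponent of $p$: every edge in $S_1 \cap S_2$ has its vertices inside $V(S_1) \cap V(S_2)$, and since $K_n^r[S_1] \cong \tilde H$ is balanced, any sub-hypergraph of $S_1$ with at most $k$ vertices has at most $km/\ell$ edges; thus $|S_1 \cap S_2| \le km/\ell$ and $|S_1 \cup S_2| \ge 2m - km/\ell$. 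Combining gives, after summing over $k$,
\[
\sum_{\substack{S_2 \in \calS \\ V(S_1) \cap V(S_2) \neq \varnothing}} \rho^{|V(S_1 \cup S_2)|} p^{|S_1 \cup S_2|} \;\le\; n^\ell \ell^{rm}\rho^{2\ell} p^{2m} \sum_{k=1}^{\ell} x^k,
\quad x := \frac{\ell}{n\rho\, p^{m/\ell}} = \ell \cdot n^{-(\gamma - \alpha m/\ell)}.
\]

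The only nontrivial obstacle is handling this geometric-type sum, and the choice of $\tilde H$ is precisely engineered to make it disappear: by \eqref{eq:edge-vertex-ratio-between}, $m/\ell < \gamma/\alpha$, so $\gamma - \alpha m/\ell$ is a strictly positive constant, $x = o(1)$, and $\sum_{k=1}^{\ell} x^k \le 2x$ for all sufficiently large $n$. Substituting back, summing over the $N$ choices of $S_1$, and absorbing the factor $2^{2\ell}\cdot 2$ together with an extra $\ell$ into $8^\ell$ yields the claimed $\Var_\calP(T) \le 8^\ell\, N\, n^{\ell-1}\,\ell^{1+rm}\,\rho^{2\ell-1}\, p^{2m - m/\ell}$.
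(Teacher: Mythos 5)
Your proposal is correct and follows essentially the same route as the paper's proof: lower-bound $\EE_\calP[T]$ by restricting to $\{V(S)\subseteq Z\}$; discard disjoint-vertex-set pairs in the variance; bound each remaining covariance by $\EE_\calP[\psi_{S_1\cup S_2}(Y)]\le 2^{2\ell}\rho^{|V(S_1\cup S_2)|}p^{|S_1\cup S_2|}$ via Lemma~\ref{lem:psi-s-1-s-2-exp-p}; enumerate by the size of $V(S_1)\cap V(S_2)$, invoke balancedness of $\tilde H$ to get $|S_1\cap S_2|\le km/\ell$, and collapse the resulting geometric sum using $m/\ell<\gamma/\alpha$ from \eqref{eq:edge-vertex-ratio-between}. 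The only cosmetic difference is that you apply balancedness to $S_1$ while the paper applies it to $S_2$; since both are isomorphic to $\tilde H$, this is immaterial.
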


\begin{proof}
    As $Y_e$ and $Y_f$ are conditionally independent for $e \neq f$ given $Z$, we note that for a specific $S \in \calS$, we have
    \[\calP\left(\psi_{S}(Y) = 1 \mid V(S) \subseteq Z\right) = p^m.\]
    In addition, $\psi_S(Y) \in \{0,1\}$, so we have
    \begin{equation*}
        \EE_\calP[T] \geq \sum_{S \in \calS} \calP(V(S) \subseteq Z) \, \calP\left(\psi_{S}(Y) = 1 \mid V(S) \subseteq Z\right) 
        = N\rho^\ell p^m.
    \end{equation*}
    
    For the variance, we note that $z_i$ and $z_j$ are independent for $i \neq j$ and so $\psi_{S_1}(Y)$ and $\psi_{S_2}(Y)$ are independent if $V(S_1) \cap V(S_2) = \varnothing$.
    Using again the argument that gives \eqref{eq:var-s-1-s-2-union}, we obtain
    \begin{align*}
        \Var_\calP(T) = \sum_{\substack{S_1, S_2 \in \calS, \\ V(S_1) \cap V(S_2) \neq \varnothing}}\Cov_\calP\left(\psi_{S_1}(Y), \psi_{S_2}(Y)\right) 
        \leq \sum_{\substack{S_1, S_2 \in \calS, \\ V(S_1) \cap V(S_2) \neq \varnothing}}\EE_\calP\left[\psi_{S_1\cup S_2}(Y)\right] .
    \end{align*}
By Lemma~\ref{lem:psi-s-1-s-2-exp-p}, we then get 
    \begin{align}
        \Var_\calP(T) &\leq \sum_{\substack{S_1, S_2 \in \calS, \\ V(S_1) \cap V(S_2) \neq \varnothing}}2^{2\ell} \rho^{|V(S_1\cup S_2)|}p^{|S_1\cup S_2|} \notag \\
        &= 2^{2\ell}\,\rho^{2\ell}\,p^{2m}\,\sum_{\substack{S_1, S_2 \in \calS, \\ V(S_1) \cap V(S_2) \neq \varnothing}}\rho^{-|V(S_1) \cap V(S_2)|}p^{-|S_1\cap S_2|}, \label{eq:var-p-t-bound-intermediate}
    \end{align}
    where we use the facts 
    \[|S_1 \cup S_2| = 2m - |S_1 \cap S_2|, \quad |V(S_1 \cup S_2)| = |V(S_1) \cup V(S_2)| = 2\ell - |V(S_1) \cap V(S_2)|.\]
    As in the proof of Lemma \ref{lemma:Q_bounds_T}, let us fix $S_1$ and bound the following:
    \begin{align}\label{eqn:P_bound_for_S1}
        \sum_{\substack{S_2 \in \calS, \\ V(S_1) \cap V(S_2) \neq \varnothing}}\rho^{-|V(S_1) \cap V(S_2)|}p^{-|S_1\cap S_2|} \leq \sum_{\ell' = 1}^\ell \sum_{\substack{V' \subseteq V(S_1), \\ |V'| = \ell'}}\sum_{\substack{S_2 \in \calS, \\ V' \subseteq V(S_2)}}\rho^{-\ell'} p^{-|S_1\cap S_2|}.
    \end{align}
    Let us first consider a fixed $V'$ and bound the number of choices for $S_2$ in the inner-most sum.
    There are at most $n^{\ell - \ell'}$ choices for the remaining vertices in $S_2$ and at most $\ell^{rm}$ possible choices for the edges in $S_2$.
    Note that $V(S_1 \cap S_2) \subseteq V(S_1) \cap V(S_2)$.
    Furthermore, as $S_1\cap S_2 \subseteq S_2$ and $K_n^r[S_2]$ is a balanced hypergraph, we have
    \[\frac{|S_1\cap S_2|}{|V(S_1) \cap V(S_2)|} \leq \frac{|S_1\cap S_2|}{|V(S_1 \cap S_2)|} \leq \frac{m}{\ell}.\]
    In particular, we can bound \eqref{eqn:P_bound_for_S1} by
    \begin{align*}
        \sum_{\substack{S_2 \in \calS, \\ V(S_1) \cap V(S_2) \neq \varnothing}}\rho^{-|V(S_1) \cap V(S_2)|}p^{-|S_1\cap S_2|} 
        &\leq \sum_{\ell' = 1}^\ell \sum_{\substack{V' \subseteq V(S_1), \\ |V'| = \ell'}}n^{\ell - \ell'}\,\ell^{rm}\,\rho^{-\ell'}p^{-m\ell'/\ell} \\
        &\leq n^\ell\,\ell^{rm}\sum_{\ell' = 1}^\ell\left(\frac{\ell}{n\rho\,p^{m/\ell}}\right)^{\ell'},
    \end{align*}
    where we use the fact that there are at most $\ell^{\ell'}$ choices for $V'$ given $S_1$. 
    Let us consider the term in the sum.
    We have
    \[\frac{\ell}{n\rho\,p^{m/\ell}} = \ell\,n^{\alpha\,m/\ell - \gamma}.\]
    Note that as a result of \eqref{eq:edge-vertex-ratio-between}, we have
    \[\alpha\frac{m}{\ell} - \gamma < 0.\]
    In particular, we have for some small constant $\delta = \delta(\alpha, \beta, \gamma) > 0$:
    \[\frac{\ell}{n\rho\,p^{m/\ell}} \leq \ell\,n^{-\delta} < \frac{1}{2}.\]
    Finally, we can bound \eqref{eqn:P_bound_for_S1} as follows:
    \[\sum_{\substack{S_2 \in \calS, \\ V(S_1) \cap V(S_2) \neq \varnothing}}\rho^{-|V(S_1) \cap V(S_2)|}p^{-|S_1\cap S_2|} \leq 2\,n^{\ell}\,\ell^{rm}\left(\frac{\ell}{n\rho\,p^{m/\ell}}\right).\]
    With this in hand, we return to our original computation for the variance in \eqref{eq:var-p-t-bound-intermediate} to get
    \[\Var_\calP(T') \leq 2^{2\ell+1}\,N\,n^\ell\,\ell^{rm}\rho^{2\ell}\,p^{2m}\left(\frac{\ell}{n\rho\,p^{m/\ell}}\right) \leq 8^\ell\,N\,n^{\ell - 1}\,\ell^{1 + rm}\,\rho^{2\ell - 1}p^{2m - m/\ell},\]
    as desired.
\end{proof}

\begin{proof}[Proof of Proposition~\ref{prop:upper-small-gamma}]
Let $\lambda \defeq N\,\rho^\ell\,p^m$.
By Lemma~\ref{lemma:P_bounds_T}, we have $\EE_\calQ[T] \ge \lambda$.
Therefore, it suffices to show that
\begin{enumerate}
\item\label{item:E_Q} $\EE_\calQ[T] = o(\lambda)$, 
\item\label{item:Var_Q} $\Var_\calQ(T) = o(\lambda^2)$, 
\item\label{item:Var_P} $\Var_\calP(T) = o(\lambda^2)$.
\end{enumerate}

Let us first consider $\EE_\calQ[T]$. 
By Lemma~\ref{lemma:Q_bounds_T}, we have
\[\frac{\EE_\calQ[T]}{\lambda} = \frac{Nq^m}{N\rho^\ell p^m} = n^{(1 - \gamma)\ell -(\beta - \alpha)m}
< n^{(1 - \gamma) \ell - \beta(1 - \gamma) m} = n^{(1 - \gamma) (\ell - \beta m)} = o(1), \]
where the bounds follow from the conditions $\alpha < \beta \gamma$ in \eqref{eqn:small_gamma_ub}, $\ell < m \beta$ in \eqref{eq:edge-vertex-ratio-between}, and $1-\gamma > 0$.
This completes the proof of Condition~\ref{item:E_Q}.
    
Now, let us consider the bound on $\Var_\calQ(T)$ from Lemma~\ref{lemma:Q_bounds_T}. 
We note the following
\[N \geq \binom{n}{\ell} \geq \left(\frac{n}{\ell}\right)^\ell.\]
With this and the above bound on $\EE_\calQ[T]$, we have 
\[\frac{m^{m+1}\EE_\calQ[T]}{\lambda^2} \ll \frac{m^{m+1}}{\lambda} \leq \frac{m^{m+1}\ell^\ell}{n^\ell\,\rho^\ell\,p^m} = n^{\alpha m - \gamma \ell + o(1)} = o(1) ,\]
because $\alpha m < \gamma \ell$ by \eqref{eq:edge-vertex-ratio-between}. 
Now, in order to prove Condition~\ref{item:Var_Q}, it remains to bound the following:
\begin{align}
    \frac{m^2\,N\,n^{\ell(1 - 1/m)}\,\ell^{r(m-1)}\,q^{2m - 1}}{\lambda^2} &= \frac{m^2\,n^{\ell(1 - 1/m)}\,\ell^{r(m-1)}\,q^{2m - 1}}{N\rho^{2\ell}p^{2m}} \notag \\
    &\leq \frac{m^2\,n^{-\ell/m}\,\ell^{\ell + r(m-1)}\,q^{2m - 1}}{\rho^{2\ell}p^{2m}} \notag \\
    &= n^{-\ell/m + \beta + 2\ell(1-\gamma) - 2m(\beta - \alpha) + o(1)}. \label{eq:n-exponent-negative-several-terms}
\end{align}
Let us consider the exponent in \eqref{eq:n-exponent-negative-several-terms}.
As a result of \eqref{eqn:small_gamma_ub}, we have
\begin{align*}
    -\ell/m + \beta + 2\ell(1-\gamma) - 2m(\beta - \alpha) &= \left(1-\gamma - \frac{1}{2m}\right)2\ell - \left(\beta\left(1 - \frac{1}{2m}\right) - \alpha\right)2m \\
    &< \left(1-\gamma - \frac{1}{2m}\right)2\ell - 2m \beta\left(1 - \gamma - \frac{1}{2m}\right) \\
    &= 2\left(1-\gamma - \frac{1}{2m}\right)(\ell - m\beta).
\end{align*}
Since $\gamma < 1/2$ and $m \geq 1$, we have $1 - \gamma - 1/(2m) > 0$.
In addition, $\ell < m \beta$ as a result of \eqref{eq:edge-vertex-ratio-between}. 
Hence, the quantity above is negative, and so \eqref{eq:n-exponent-negative-several-terms} is $o(1)$.
Putting it all together, we have proved $\frac{\Var_{\calQ}(T)}{\lambda^2} = o(1)$, which is Condition~\ref{item:Var_Q}.

Finally, let us consider the bound on $\Var_\calP(T)$ in Lemma~\ref{lemma:P_bounds_T}.
We have
\begin{align*}
    \frac{\Var_\calP(T)}{\lambda^2} &\leq \frac{8^\ell\,N\,n^{\ell - 1}\,\ell^{1 + rm}\,\rho^{2\ell - 1}p^{2m - m/\ell}}{N^2\rho^{2\ell}p^{2m}} 
    \leq \frac{8^\ell\,\ell^{1 + rm + \ell}}{n\rho\,p^{m/\ell}} 
    = n^{\alpha m/\ell - \gamma + o(1)} 
    = o(1) ,
\end{align*}
where we again use $N \geq \left(\frac{n}{\ell}\right)^\ell$ and $\alpha m / \ell - \gamma < 0$ by \eqref{eq:edge-vertex-ratio-between}.
This proves Condition~\ref{item:Var_P}.
\end{proof}

\appendix

\section{Detection-refutation gap}
\label{app:det-ref}

Here we justify the claim that for $\gamma > 1/2$ we expect an inherent detection-refutation gap, implying that tests based on checking feasibility of a convex relaxation (e.g., the sum-of-squares program of~\cite{sos-dense-subgraph}) are strictly suboptimal for detection. We focus throughout on the graph case $r=2$.

We first formally define the refutation task. We say that a graph contains a ``dense subgraph'' if there is a subgraph on $(1 \pm o(1)) \rho n$ vertices with $(1 \pm o(1)) p (\rho n)^2/2$ edges, where each $o(1)$ stands for some particular $o(1)$ quantity chosen so that the planted distribution $\calP$ (defined in Section~\ref{section:main_results}) will contain a dense subgraph with high probability (i.e., probability $1-o(1)$). Refutation is the following algorithmic task. Given a graph $G$, the goal is to output NO or MAYBE with the following two guarantees: (1) if $G$ contains a dense subgraph then the output must be MAYBE, and (2) if $G$ is drawn from the null distribution $\calQ = G(n,q)$, the output must be NO with probability $1-o(1)$. For intuition, it is crucial to note that the algorithm is only allowed to output NO if it has proven with absolute certainty that there is no dense subgraph. Refutation (or certification) tasks of this flavor have been considered in prior work, e.g., \cite{rip-cert,sk-cert,local-stats,spectral-planting,nonneg-pca,coloring-clique}.

A natural approach to solve the refutation problem is to check feasibility of a convex relaxation for existence of a dense subgraph. If the relaxation is infeasible w.h.p.\ over $\calQ$ (for some parameters $p,q,\rho$), we have a successful refutation algorithm: output NO if the relaxation is infeasible, and MAYBE otherwise. (Note that by virtue of being a relaxation, if the relaxation is infeasible then this proves there is no dense subgraph.)  This also gives a successful detection algorithm: output ``$\calQ$'' if the relaxation is infeasible, and ``$\calP$'' otherwise. (Note that by design, the relaxation will be feasible w.h.p.\ over $\calP$, because a dense subgraph exists.)

Our goal in this section is to argue that for $\gamma > 1/2$, the refutation problem is computationally hard whenever $\alpha > \beta/2 + \gamma - 1/2$, making it strictly harder than detection (because detection is easy whenever $\alpha < \beta/2 + 2 (\gamma - 1/2)$). As a result, we expect that \emph{any} test based on checking feasibility of a convex relaxation (in the sense described above) cannot be optimal for detection, or else this would imply a too-good-to-be-true refutation algorithm. Indeed, if the relaxation succeeds at detection, it must be infeasible w.h.p.\ over $\calQ$, meaning it also succeeds at refutation.

We will argue hardness of refutation in a manner similar to~\cite{sk-cert}: we will construct a \emph{different} planted distribution $\widetilde\calP$ that, like $\calP$, contains a dense subgraph w.h.p.; we will then show low-degree hardness of distinguishing $\widetilde\calP$ from $\calQ$, leading us to conjecture that no polynomial-time algorithm can distinguish $\widetilde\calP$ and $\calQ$. This conjecture, if true, formally implies hardness of refutation because a successful refutation algorithm could be used to distinguish $\widetilde\calP$ and $\calQ$ as discussed above.

To summarize, our goal for the rest of this section is to construct a distribution $\widetilde\calP$ over graphs such that a dense subgraph exists w.h.p., and then prove that if $\gamma > 1/2$ and $\alpha > \beta/2 + \gamma - 1/2$ then no degree-$n^{o(1)}$ polynomial weakly separates $\widetilde\calP$ and $\calQ$.

\paragraph{Construction of $\widetilde\calP$.}

We now construct an auxiliary planted distribution $\widetilde\calP$ that has a planted dense subgraph but is more difficult to distinguish from $\calQ = G(n,q)$ than $\calP$ is. As in the main text, we fix parameters $0 < \alpha < \beta < 1$ and $\gamma \in (0,1)$, and consider the scaling $p = n^{-\alpha}$, $q = n^{-\beta}$, and $\rho = n^{\gamma-1}$. It will be convenient to parametrize the observed graph in an unusual way: for $i<j$, let $Y_{ij} = a \defeq \sqrt{(1-q)/q}$ if edge $(i,j)$ is present and $Y_{ij} = b \defeq -\sqrt{q/(1-q)}$ otherwise. (This ensures $\EE_\calQ[Y_{ij}] = 0$ and $\EE_\calQ[Y_{ij}^2] = 1$.) A graph is sampled from $\widetilde\calP$ as follows. As before, sample a set of planted vertices $Z \subseteq [n]$ where each vertex is included independently with probability $\rho$. Define $u \in \R^n$ by $u_i = \sqrt{(1-\rho)/\rho}$ if $i \in Z$ and $u_i = -\sqrt{\rho/(1-\rho)}$ if $i \notin Z$. (This ensures $\EE[u_i] = 0$ and $\EE[u_i^2] = 1$.) Now, conditioned on $u$, let $Y = (Y_{ij})_{i < j}$ have independent entries $Y_{ij} \in \{a,b\}$ such that $\EE_{\tilde \calP}[Y_{ij} | u] = \lambda u_i u_j$. Choose the scalar $\lambda$ so that planted edges have the desired probability, i.e., edge $(i,j)$ occurs with probability $p$ when $i,j \in Z$; this gives $\lambda = (1+o(1)) p \rho / \sqrt{q}$.

\paragraph{Proof of low-degree hardness.}

We now show that if $\gamma > 1/2$, $\alpha > \beta/2 + \gamma - 1/2$, and $D = n^{o(1)}$, then $\|\widetilde L_{\le D}\| = 1+o(1)$, which rules out weak separation as discussed in Section~\ref{section:lower_bound}. By~\cite[Proposition~B.1]{spectral-planting} along with the fact that revealing extra observations $(Y_{ij})_{i \ge j}$ can only increase $\|\widetilde L_{\le D}\|$,
\[ \|\widetilde L_{\le D}\|^2 \le \sum_{d=0}^D \frac{1}{d!} \EE \langle \lambda uu^\top,\lambda vv^\top \rangle^d = \sum_{d=0}^D \frac{\lambda^{2d}}{d!} \EE \langle u,v \rangle^{2d}, \]
where $u$ is defined above and $v$ is an independent copy of $u$. Note that $\langle u,v \rangle$ is a sum of $n$ i.i.d.\ bounded random variables and following the proof of~\cite[Lemma~B.3]{SW-estimation}, we may conclude
\[ \EE\langle u,v \rangle^{2d} \le \sqrt{2\pi} \rho^{-2d} \left[(4d \rho^2 n)^d + (8d/3)^{2d}\right]. \]
Combining this with the above, $\|\widetilde L_{\le D}\| = 1+o(1)$ provided $\lambda \ll \rho$ and $\lambda^2 n \ll 1$. Since $\gamma > 1/2$, this reduces to the condition $\alpha > \beta/2 + \gamma - 1/2$ as desired.

\subsection*{Acknowledgments}

We thank Guy Bresler and Aaron Potechin for helpful discussions.

\bibliographystyle{alpha}
\bibliography{main}

\end{document}